\newcommand{\ket}[1]{|  #1 \rangle}
\newcommand{\bra}[1]{\langle #1 |}
\newcommand{\hilbert}{\mathcal{H}}
\newcommand{\C}{\mathbb{C}}
\newcommand{\Z}{\mathbb{Z}}
\newcommand{\N}{\mathbb{N}}
\newcommand{\im}{\mathrm{im}} 
\newcommand{\tr}{\operatorname{Tr}  }
\declaretheorem[name={Remark}, style=remark, unnumbered]{remark}
\newtheorem{theorem}{Theorem}[]
\newtheorem{proposition}[theorem]{Proposition}
\newtheorem{definition}[theorem]{Definition}
\newtheorem{problem}[theorem]{Problem}
\newtheorem{promise}[theorem]{Promise}
\begin{document}


\title{Thermodynamic optimization of quantum algorithms:  On-the-go erasure of qubit registers}


\author{Florian Meier}
\email[]{florian.meier@tuwien.ac.at}
\affiliation{Institute for Theoretical Physics, ETH Zurich, 8093 Z\"{u}rich, Switzerland}
\affiliation{Atominstitut, Technische Universität Wien, 1020 Vienna, Austria}
\author{Lídia del Rio}
\email[]{lidia@phys.ethz.ch}
\affiliation{Institute for Theoretical Physics, ETH Zurich, 8093 Z\"{u}rich, Switzerland}


\date{\today}

\begin{abstract}
We consider two bottlenecks in quantum computing: limited memory size and noise caused by heat dissipation. Trying to optimize both, we investigate ``on-the-go erasure'' of quantum registers that are no longer needed for a given algorithm: freeing up auxiliary qubits as they stop being useful would facilitate the parallelization of computations. We study the minimal thermodynamic cost of erasure in these scenarios, applying results on the Landauer erasure of entangled quantum registers. For the class of algorithms solving the Abelian hidden subgroup problem, we find optimal on-the-go erasure protocols. We conclude that there is a trade-off: if we have enough partial information about a problem to build efficient on-the-go erasure, we can use it to instead simplify the algorithm, so that fewer qubits are needed to run the computation in the first place. We provide explicit protocols for these two approaches.
\end{abstract}


\maketitle


\section{Introduction}
\label{sec:introduction}
\paragraph*{When is the best time to reset qubit registers?} A default option is to run a whole algorithm and reset all registers to $\ket0$ at the end, after the final measurements. However, if the total number of qubits is a limitation and we need to run several algorithms concurrently, we may want to free up some registers as they stop being useful: for example, in the period-finding algorithm, the auxiliary register can be discarded after applying the oracle (Figure~\ref{fig:algorithm_vanilla}).
Another critical factor may be heat dissipation: Landauer's principle tells us that the erasure of every single qubit from a fully mixed state to $\ket0$ has a fundamental work cost of $k_B T \ln 2$ if performed at temperature $T$, releasing the same amount of heat to the environment \cite{landauer}. As heat dissipation in a quantum computer threatens coherence, reducing the work cost of erasure may be of critical importance.

\begin{figure}
    \subfloat[\label{fig:algorithm_vanilla} Class of algorithms considered.]{\includegraphics[width=\columnwidth]{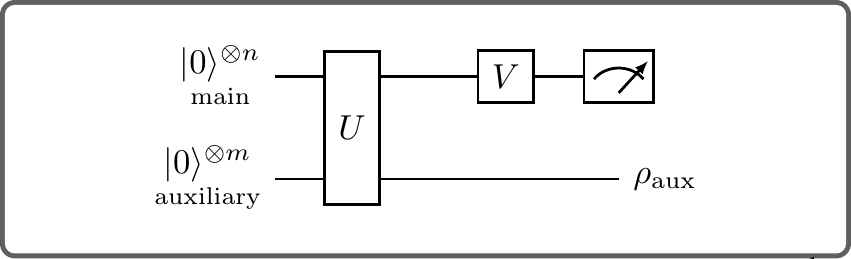}}\\
    \subfloat[\label{fig:algorithm_otg_bruteforce} On-the-go brute-force erasure.]{\includegraphics[width=\columnwidth]{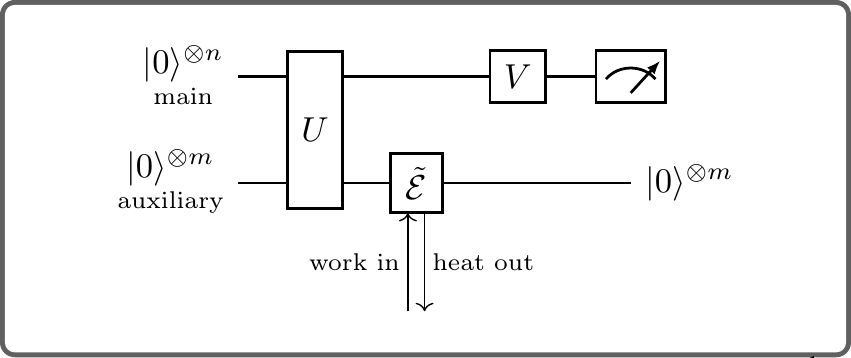}}\\
    \subfloat[\label{fig:algorithm_bennet_uncompute} Bennett's uncomputing erasure.]{\includegraphics[width=\columnwidth]{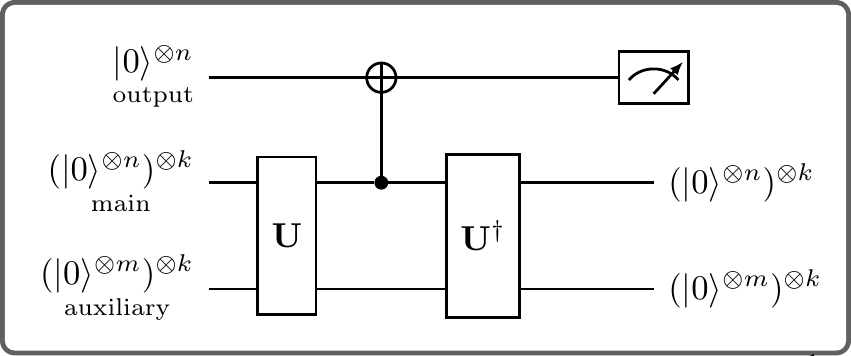}}\\
    \subfloat[\label{fig:algorithm_otg_optimized} Optimized on-the-go erasure.]{\includegraphics[width=\columnwidth]{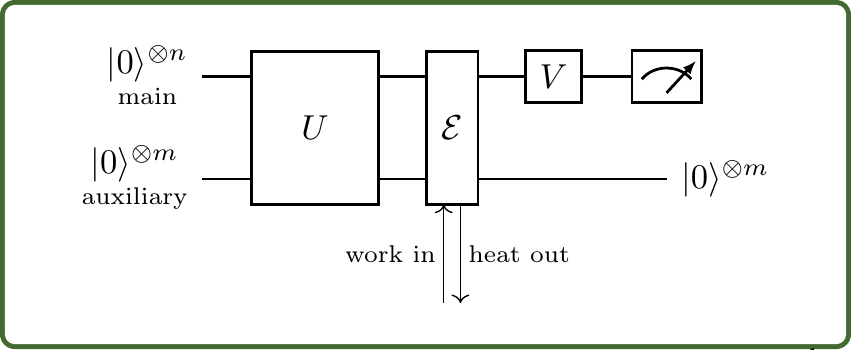}}
    \caption{\textit{Qubit erasures.} Green frames are new proposals.
    (\ref{fig:algorithm_vanilla}) We consider quantum algorithms where a main register is used until the final measurement, but there can be auxiliary registers, which are only needed for part of the algorithm --- for example, the period-finding algorithm and more generally algorithms for hidden subgroup problems are of this form. 
    (\ref{fig:algorithm_otg_bruteforce}) In order to free up memory space, the auxiliary register can be erased on-the-go. A brute-force erasure at temperature $T$ will have work cost $k_BT\ln2$ per qubit due to Landauer's principle.
    (\ref{fig:algorithm_bennet_uncompute}) For Bennett's uncomputing, the original probabilistic algorithm is made essentially deterministic by running many copies of $(V\otimes\mathds{1})\circ U$ in parallel together with a quantum implementation of the classical post-processing, summarized as $\mathbf{U}$. The result of this calculation is coherently copied to an output register and $\mathbf{U}$ is uncomputed.
    (\ref{fig:algorithm_otg_optimized}) We propose an optimized on-the-go erasure scheme which takes advantage of the entanglement between main and auxiliary register to reduce the work cost of erasure of the latter.}
    \label{fig:comparing_erasures}
\end{figure}

\paragraph*{Previous erasure schemes.} We consider algorithms that use a \emph{main register} of $n$ qubits and an \emph{auxiliary register} of $m$ qubits (Figure~\ref{fig:algorithm_vanilla}); the latter can be discarded at some halfway point in the algorithm. For example, the period-finding algorithm is of this form. To optimize memory space, we may want to erase it as soon as possible: a brute-force erasure procedure  of those $m$ qubits (Figure~\ref{fig:algorithm_otg_bruteforce}) would dissipate heat $m\ k_B T \ln2$ with a simple fixed map, independent of the algorithm.
On the other extreme, if we only want to optimize the heat cost, we can apply Bennett's reversible erasure procedure \cite{Bennett1982,watrous2008quantum,Baumeler2019}, which coherently copies the output register to an external system, and then uncomputes the algorithm's circuit on the original qubits reversibly (Figure \ref{fig:algorithm_bennet_uncompute}). The main drawback of this procedure emerges when it is applied to probabilistic quantum algorithms like period-finding: Bennett's uncomputing only works when the output register is decoupled from the rest of the quantum computer --- in other words, when the algorithm outputs a deterministic result  \cite{Rio2011a}. Probabilistic quantum algorithms are made (approximately) deterministic by repeating them many times, and applying classical post-processing to the probabilistic outputs, including for example a majority vote. To apply Bennett's uncomputing to a probabilistic algorithm like period-finding, we would have to implement all these runs of the algorithm and the (usually classical) post-processing as a large reversible quantum circuit, so that the final post-processed quantum output is approximately decoupled from the rest of the memory.%
\footnote{For example, majority votes can  be implemented reversibly \cite{Nashiry2017}.} 
This process has a large complexity cost both in terms of memory size and circuit length; while it may be worth pursuing in a distant future when our computers work flawlessly and reversibly at the quantum level,  in this work we focus on a NISQ regime, and try  to optimize both thermodynamic and computational complexity costs of algorithms.  

\paragraph*{Thermodynamic considerations.}  We will work in the quantum resource theory of thermal operations \cite{2019review_lostaglio,Brandao2013,Liu2019}. In this framework, unitary operations on degenerate systems are given for free, and irreversible operations like erasure have associated work costs. 
Contemporary quantum computers are of course still far from this ideal scenario; nonetheless, the fundamental limits for the energy cost of implementing single-qubit unitaries are comparable to that of erasure \cite{CG21}. 
Moreover, note that the energy requirements to implement common unitary operations (which depend on the quantum control mechanisms) scale  sublinearly on the number of qubits, while erasure scales linearly \cite{CG21}. This, together with recent erasure experiments that approach Landauer's limit  \cite{Orlov2012,KJ14,CS21}, have us speculate that energies of the order $k_B T$ may eventually become relevant to quantum computing.
Overall, thermodynamic optimization of quantum computation entails at least three independent components: (1) cost of unitary gates, (2) cost of erasure of fully mixed qubits, (3) optimizing number of fully mixed qubits that must be erased (see \cite{Goold2016,Binder2018,Taranto2021,Auffeves2022} for reviews on the thermodynamics of quantum computation). Our work addresses the third component, and can be applied in conjunction with restrictions or improvements on the former two. This is further discussed in Section~\ref{sec:discussion}.

\subsection{Contribution of this paper} 
\paragraph*{Optimized on-the-go erasure.} Making use of entanglement between the main and auxiliary register as a thermodynamic resource \cite{Rio_2011,Goold2016,2019review_lostaglio,Brandao2013,Brandao2015,Liu2019}, we introduce a new erasure scheme (Figure~\ref{fig:algorithm_otg_optimized}).
It entails a strictly lower heat dissipation than brute-force erasure; in contrast to Bennett's uncomputing, the auxiliary register is reset on-the-go without needing additional qubits. However, these improvements do not come for free: the main cost of our scheme will arise from the information to access the entanglement.

\paragraph*{List of results.} In the setting of the Abelian hidden subgroup problem, we use partial information about entanglement to optimize the erasure of auxiliary registers. In particular: 
\begin{enumerate}
    \item We find that optimal erasure (where all the entanglement between registers is exploited) is only possible if we already know the solution to the problem, i.e.\ the hidden subgroup (Theorem~\ref{thm:nogoupperboundthm}).
    \item Given partial information about the problem, we provide an optimal on-the-go erasure protocol of auxiliary registers and compute its work cost (Theorem~\ref{thm:partialsolutionworkcost}). 
    \item As an alternative to erasure, we can use that same partial information to simplify the algorithm, so that it uses fewer qubits (Theorem~\ref{thm:partialinfocorrespondence}). We provide explicit protocols for the cases of black-box oracles and open circuit access to oracles (Figure~\ref{fig:otg_hsp} and~\ref{fig:simplification_hsp}).
    \item There is a precise trade-off between the thermodynamic cost of erasure and algorithm simplification. The optimal choice of implementation (in terms of computational complexity) depends on the oracle: if we have open circuit access to the oracle, it is more efficient to simplify the circuit; if the oracle is given as a black box it is roughly equivalent to perform on-the-go erasure or to simplify the circuit. 
\end{enumerate}

\paragraph*{Structure.} In Section~\ref{sec:setting} we review the mathematical tools and notions of quantum thermodynamics along with the algorithm solving the Abelian hidden subgroup problem. These are the main ingredients on which our results are based which will be shown in Section~\ref{sec:results}.
By the example of the period finding algorithm in Section~\ref{sec:toy_example_pfa}, we illustrate the key concepts of our optimized on-the-go erasure scheme and in Section~\ref{sec:general_bounds_otg_erasure}, we generalize the example to the Abelian hidden subgroup problem. There, we state the main theorems \ref{thm:upperboundthm} - \ref{thm:partialinfocorrespondence} together with a qualitative sketch of the proofs. 
Discussions and open questions can be found in Section~\ref{sec:discussion}.
The full proofs of the main theorems and further generalizations are explored in the appendix: in appendix~\ref{appendix:erasure} an explicit erasure protocol \cite{Skrzypczyk2014} is reviewed, and appendices \ref{appendix:online} and \ref{appendix:simplification} contain the proofs for our results.

\section{Setting \& building blocks}
\label{sec:setting}
In this section we briefly review the results obtained in \cite{Rio_2011} regarding optimal bounds for the thermodynamic costs of erasing a memory with quantum side information --- this will be useful as a building block for our erasure schemes. Then we recall the algorithm solving the Abelian hidden subgroup problem, and lastly, we devise a strategy for how to optimize the erasure of the auxiliary register of said algorithm.

\subsection{Erasure with quantum side information}
\paragraph*{Work cost of erasure.} Landauer's principle \cite{landauer,Bennett2003} demonstrates the intricate relation between information theory and thermodynamics. It states that logically irreversible operations come with an intrinsic work cost, related to the temperature of the environment where the computation is carried out.
If we are looking at a system $S$ initially in a state $\rho_S$, the average work cost of erasing this system at a temperature $T$ (that is setting $\rho_S\mapsto\ket{0}\bra{0}_S$ using a thermal bath at temperature $T$) scales with the entropy of the initial state,
\begin{equation}\label{eq:landauer_principle}
    W(S)= H(S)k_BT\ln 2,
\end{equation}
where $k_B$ is the Boltzmann constant and $H(\rho)=-\tr({\rho\log_2\rho})$ is the von-Neumann entropy \cite{Goold2016}.
In the setting of Figure~\ref{fig:algorithm_otg_bruteforce}, being ignorant about the state of the $m$-qubit auxiliary system, one has to apply a fixed erasure map and not the optimal map designed for the actual state $\rho_S$. The average work cost of this map corresponds to the worst-case scenario of erasing a fully mixed state $\rho_S=(\mathds{1}/2)^{\otimes m}$, that is $mk_B T \ln 2$ for erasure at temperature $T$.
This energy is then dissipated into the rest of the quantum computer, causing it to heat up, which may increase noise and decoherence.
Using side information, available as entanglement between the main and auxiliary registers, we attempt to improve this work cost by using the following result.

\begin{restatable}[Erasure with quantum side information \cite{Rio_2011}]{lemma}{theoremlidia}
    \label{thm:theoremlidia}
    Given two degenerate quantum registers $G$ and $S$ and any reference system $R$, then there exists a process $\mathcal{E}$ acting on $G$, $S$ and an environment at temperature $T$ that erases $S$ while preserving $G$ and $R$, that is,
    \begin{equation}
        \rho_{RGS}\xmapsto[]{\mathcal{E}}\rho_{RG}\otimes \ket{0}\bra{0}_S,\text{ where }\rho_{RG}=\tr_S(\rho_{RGS}),
    \end{equation}
    which does not exceed an average work cost (and heat dissipation) of
    \begin{equation}\label{eq:landauer_principle_better}
        W(S|G)_\rho=H(S|G)_\rho \ k_B T \ln2,
    \end{equation}
    with $H(S|G)_\rho =H(GS)_\rho-H(G)_\rho$ the conditional von-Neumann entropy of $S$ conditioned on $G$. This procedure is reversible on $GS$: there exists a process that achieves the transformation $\rho_{G}\otimes \ket{0}\bra{0}_S \mapsto \rho_{GS}$ for the symmetric work cost $- W(S|G)_\rho$. 
\end{restatable}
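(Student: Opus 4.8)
The plan is to prove the bound by exhibiting an explicit protocol built from the two primitives available in the resource theory of thermal operations on degenerate systems: arbitrary unitaries on $G$, $S$ and ancillas (free), and Landauer erasure of a maximally mixed qubit against the bath at cost $k_B T\ln 2$, together with its time-reverse, which \emph{extracts} $k_B T\ln 2$ while randomising a $\pure0$ qubit — this is the primitive reviewed in appendix~\ref{appendix:erasure}. First I would eliminate the reference system: since $\mathcal{E}$ is required to act only on $G$, $S$ and the bath, it commutes with every operation on $R$, so it suffices to prove the claim when $R$ purifies $\rho_{GS}$, and the general case follows by writing $R=R'R''$ with $R'$ the purifying system and tracing out $R''$ at the very end. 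I would then pass to the asymptotic i.i.d.\ regime, proving that the rate $H(S|G)_\rho$ of work per copy is achievable for $\rho_{RGS}^{\otimes N}$ with error tending to $0$, and recover the single-copy statement either from one-shot versions of the primitives or by a standard blocking argument combined with continuity of the conditional von Neumann entropy (Alicki--Fannes--Winter).

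The heart of the argument is a \emph{decoupling / quantum-state-merging} step carried out ``in place''. The claim is that there is a free unitary $U$ on $G^N S^N$ (with a negligible ancilla) together with a factorisation of the $S^N$ registers into a ``core'' $K$ with $|K|\approx N\,H(S|G)_\rho$ qubits and a ``redundant'' part $J$ comprising the remaining $|J|=mN-|K|$ qubits, such that
\begin{equation}
  U\,\rho_{R\,G^N S^N}\,U^\dagger \;\approx\; \tilde\rho_{R\,G^N K}\;\otimes\;\pure0^{\otimes|J|}_{J},
\end{equation}
where crucially the marginal on $K$ is close to maximally mixed. Operationally this is nothing but (a one-shot or i.i.d.) quantum state merging run without a channel: the part of $S$ that is already ``known'' to $G$ — the Schumacher-compressible part together with the $I(S{:}G)_\rho$ bits per copy that $G$ shares with $S$ — can be rotated back to $\pure0$ for free by a $G$-controlled unitary, while the genuinely residual randomness of $S$ given $G$, of which there are $H(S|G)_\rho$ bits per copy, is concentrated into the register $K$. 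I would derive this either from the operational state-merging protocol of Horodecki--Oppenheim--Winter or directly from the decoupling theorem applied to a Haar-random $U$ on $S^N$, using purity of $R\,G^N S^N$ to convert the decoupling condition into the displayed factorised form.

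Given this lemma the protocol is: (i) apply $U$ for free; (ii) Landauer-erase the $\approx N\,H(S|G)_\rho$ near-maximally-mixed qubits of $K$ at total cost $\approx N\,H(S|G)_\rho\, k_B T\ln 2$; (iii) apply a free relabelling unitary sending the now-pure $S^N$ registers to $\pure0^{\otimes mN}$, leaving $R$ and $G^N$ in $\tilde\rho_{RG^N}=\tr_{S^N}\rho_{R\,G^N S^N}$. Dividing by $N$ and letting $N\to\infty$ gives the bound~\eqref{eq:landauer_principle_better}. When $H(S|G)_\rho<0$, the in-place merging step instead liberates $\approx N\,|H(S|G)_\rho|$ pure qubits beyond resetting all of $S^N$, and feeding these into the reverse Landauer primitive \emph{extracts} $\approx N\,|H(S|G)_\rho|\,k_B T\ln 2$ of work, so the signed cost per copy is again $H(S|G)_\rho\,k_B T\ln 2$; since every primitive used is individually reversible, running the whole construction backwards transforms $\rho_G\otimes\pure0_S$ into $\rho_{GS}$ at the negated signed cost $-W(S|G)_\rho$, which is the reversibility claim. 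The main obstacle is the second paragraph: proving the in-place merging/decoupling lemma with the residual register sufficiently close to maximally mixed that the clean Landauer rate applies, uniformly in the input, and controlling the accumulated error through the blocking and continuity steps; correctly handling the $H(S|G)_\rho<0$ case and exactly matching the forward and reverse work costs is the other delicate point, whereas the entropy bookkeeping ($H(S)_\rho-I(S{:}G)_\rho=H(S|G)_\rho$, additivity of work over the three steps) is routine.
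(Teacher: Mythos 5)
You should first be aware that the paper does not prove Lemma~\ref{thm:theoremlidia} at all: it is imported verbatim from \cite{Rio_2011}, and the only ingredients reproduced in the paper are the single-qubit Landauer primitive of Appendix~\ref{appendix:erasure_one_qubit_protocol} and the worked two-qubit Bell-pair example just below the lemma. Your overall architecture (reduce to the purifying reference, pass to the i.i.d.\ regime --- which is legitimate here, since the paper explicitly interprets the average work cost in the many-copy limit, so your blocking/continuity worry is moot --- decouple, pay Landauer costs, treat $H(S|G)_\rho<0$ via entanglement) is indeed the architecture of the cited reference.

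There is, however, a genuine gap at the heart of your protocol, and it sits exactly where the content of the lemma lies: preservation of $\rho_{RG}$. Your step (i) applies a joint unitary $U$ on $G^NS^N$ (or, in your alternative phrasing, a $G$-controlled uncompute of the part of $S$ ``known to'' $G$), and step (iii) then asserts that the surviving marginal on $RG^N$ is $\tr_{S^N}\rho$. Neither holds. A joint unitary on $G^NS^N$ already alters the marginal on $RG^N$, and once part of $S^N$ has been irreversibly erased you can no longer undo $U$ to restore it; a $G$-controlled uncompute is worse still, because it recoheres $G$ in the control basis and destroys its correlations with $R$ --- this is precisely the failure mode of Bennett's uncomputing for non-deterministic outputs flagged in the introduction (test it on $(\ket{000}+\ket{111})/\sqrt{2}$ on $RGS$: uncomputing $S$ turns the classically correlated $\rho_{RG}$ into a pure Bell state). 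The mechanism you are missing is the one exhibited in the paper's own example: the roughly $N(H(S)_\rho-H(S|G)_\rho)/2$ Bell pairs between $G^N$ and $S^N$ are rotated to $\ket{00}$, and the $G$-halves are then \emph{re-randomised by reverse Landauer erasure}, gaining $k_BT\ln 2$ each; this single step simultaneously restores $\rho_{RG}$ (because $\tr_S\pure{\chi}=\mathds{1}/2$) and supplies the work credit that brings the rate down to $H(S|G)_\rho$. Your bookkeeping reflects the same omission: no unitary on $S^N$ (plus pure ancillas) can leave fewer than $N H(S)_\rho$ locally mixed $S$-qubits, so a register $K$ of size $N H(S|G)_\rho$ with everything else pure cannot be produced while preserving $G$; the number of $S$-qubits actually Landauer-erased is $\approx N(H(S)_\rho+H(S|G)_\rho)/2$ (those decoupled from $RG^N$), and $H(S|G)_\rho$ per copy is only the \emph{net} of that cost against the $G$-side gain. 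Once the protocol is restructured this way --- decoupling unitary essentially on the $S$ side, Bell-pair extraction with reverse erasure of the $G$-halves, standard erasure of the decoupled remainder --- the rest of your outline, including the $H(S|G)_\rho<0$ case and the reversibility claim, goes through as in \cite{Rio_2011}.
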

The key insight of Lemma~\ref{thm:theoremlidia} is that quantum correlations (and in particular entanglement) can be used as additional resources to reduce the work cost of erasure of the auxiliary system. 
The average work cost is meant with respect to the thermodynamic limit of many independent copies of the systems $GS$ (for a brief discussion of single-shot and finite-size effects, see Section~\ref{sec:discussion}). In the following
$G$ will be the main register and $S$ will be the auxiliary register\footnote{The notation $G$ for the main register is chosen because later the main register will encode a group $G$.}.
The reference system $R$ includes the non-accessible degrees of freedom that may be correlated with our quantum registers, e.g.\ the rest of the quantum computer.

\paragraph*{Example: erasure of half of a Bell pair \cite{Rio_2011}.} This is the simplest application of Lemma~\ref{thm:theoremlidia}, which will be useful to understand the general procedure later. 
Take  system $S$ to be a single qubit with Hilbert space $\hilbert_S=\C^2$ and $G$ to be two qubits with $\hilbert_G=\hilbert_{G_1}\otimes\hilbert_{G_2}=\C^2\otimes\C^2$. Suppose that initially, $G_2$ and $S$ are entangled, 
\begin{equation}
    \rho_{GS}=\rho_{G_1}\otimes\ket{\chi}\bra{\chi}_{G_2S},
\end{equation}
where $\ket{\chi}=(\ket{00}+\ket{11})/\sqrt{2}$ is a fully entangled Bell state.
The goal is to erase $S$ while preserving $G$ --- that is, the final state should be $\rho_G\otimes\ket{0}\bra{0}_S=\rho_{G_1}\otimes \frac{\mathds{1}_{G_2}}2\otimes\ket{0}\bra{0}_S$. We achieve that with the following protocol:
\begin{enumerate}
    \item Unitarily rotate the pure state of $G_2S$ from $\ket{\chi}$ to $\ket{00}$ for free,
    \item Perform reverse erasure on $G_2$, to end up in state $\rho_{G_1} \otimes \frac{\mathds{1}_{G_2}}2\otimes\ket{0}\bra{0}_S$, gaining $k_BT\ln2$ work.
\end{enumerate}
This erasure map, decomposed in the following two steps,
\begin{align}
    \rho_{G_1}\otimes\ket{\chi}\bra{\chi}&\xmapsto[\text{free}]{U}\rho_{G_1}\otimes\ket{0}\bra{0}_{G_2}\ket{0}\bra{0}_S\\ &\xmapsto[\text{gain }k_BT\ln2]{\mathcal{E}^\dagger}\rho_{G_1}\otimes\frac{\mathds{1}_{G_2}}{2}\otimes\ket{0}\bra{0}_S,
\end{align}
does not affect the reduced state of the $G$ register:
\begin{align}
    \rho_G&=\tr_S\left({\rho_{GS}}\right) = \rho_{G_1}\otimes\frac{\mathds{1}_{G_2}}{2}\\
    &=\tr_S\left(\rho_{G_1}\otimes\frac{\mathds{1}_{G_2}}{2}\otimes\ket{0}\bra{0}_S\right).
\end{align}
At the end, the total average work cost of erasure for this toy example is
\begin{equation}
    W(S|G)=-k_BT\ln2=H(S|G)k_BT\ln2.
\end{equation}
in accordance with Eq. \eqref{eq:landauer_principle_better}.

\subsection{Hidden subgroup problem}
\begin{figure}
    \includegraphics[width=\columnwidth]{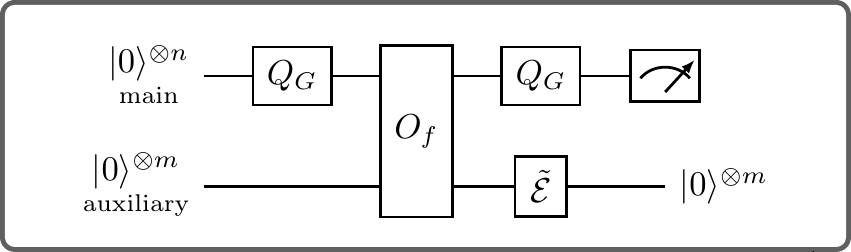}
    \caption{\textit{Abelian hidden subgroup algorithm} \cite{Mosca:1999jt,dewolf2019quantum}. The quantum circuit above solves the Abelian hidden subgroup problem. 
    Since it is of the same form as Figure~\ref{fig:algorithm_otg_optimized}, we can use it as a candidate for optimizing the on-the-go erasure.
    The main register $\hilbert_G$ encodes the group $G$ and the auxiliary register $\hilbert_S$ encodes $S$. The function oracle acts on states of the joint register via $O_f\ket{g,s}=\ket{g,s\oplus f(g)}$, with $\oplus$ denoting the bitwise XOR operation. The algorithm performs the following sequence:}\label{fig:standard_hsp_algo}
    {\begin{enumerate}
        \item Generalized quantum Fourier transform $Q_G\otimes\mathds{1}$ on $G$ register creates a superposition $\frac{1}{|G|}\sum_{g\in G}\ket{g,0}$.
        \item \label{enum:hsp_algo_2} Global oracle operation $O_f$ on both registers. At any later point we can erase $S$ with a map $\tilde{\mathcal{E}}$.
        \item Quantum Fourier transform $Q_G$ on $G$ register. 
        \item Measurement of the $G$ register.
        \item Classical post-processing of the result.
    \end{enumerate}}
\end{figure}

Several computational problems can be phrased in terms of the hidden subgroup problem (HSP) \cite{Mosca2008}, most famously period finding, which finds its application in Shor's integer factorization algorithm, and the discrete logarithm problem \cite{Shor1997}. We will first state the general problem and how our erasure algorithm applies, before looking at those particular instances.
\begin{restatable}[Hidden Subgroup Problem \cite{Mosca:1999jt}]{problem}{hsp}
\label{prob:hsp}
Let $G$ be a finite group, $S$ some finite set and $f:G\rightarrow S$ a function. Given the existence of a subgroup $H\subseteq G$ such that for all $g,g'\in G$
\begin{equation}\label{eq:defining_eq_hsp}
    f(g)=f(g')\iff gH=g'H,
\end{equation}
the goal is to determine $H$.
\end{restatable}
The HSP can be solved by an efficient\footnote{That is, polynomial time complexity under the assumption that $f$ can be implemented efficiently.} quantum algorithm originally found by \cite{kitaev1995}, under the assumption that the group $G$ is Abelian (the group operation is commutative). We will from now on be using the addition symbol $+$ for group operations in $G$ to highlight its Abelian property, that is $g+h$ instead of $gh$. Unless stated otherwise, whenever we refer to the HSP, the Abelian HSP is meant. For the general non-Abelian HSP, there are algorithms efficient in terms of oracle complexity \cite{Ettinger2004,dewolf2019quantum}, but to the authors' knowledge, no general algorithm exists that is efficient in gate complexity.
Here we follow \cite{Mosca:1999jt,dewolf2019quantum} for the quantum algorithm solving the HSP (Figure \ref{fig:standard_hsp_algo}). In Appendix \ref{appendix:hsp_group_theory} the computational steps are derived and explained in detail.
At this point, the key observation we make is that the circuit solving the HSP is precisely of the form as required, e.g.\ the circuit in Figure~\ref{fig:algorithm_vanilla}. After a unitary $U=O_f\circ\left(Q_G\otimes\mathds{1}\right)$ operation on main and auxiliary register the latter is no longer needed and can be erased by using a Landauer erasure $\tilde{\mathcal{E}}$. The computation on the main register can be continued independently.

\subsection{Strategy towards on-the-go erasure}
\label{sec:strategy}
\begin{figure}
    \includegraphics[width=\columnwidth]{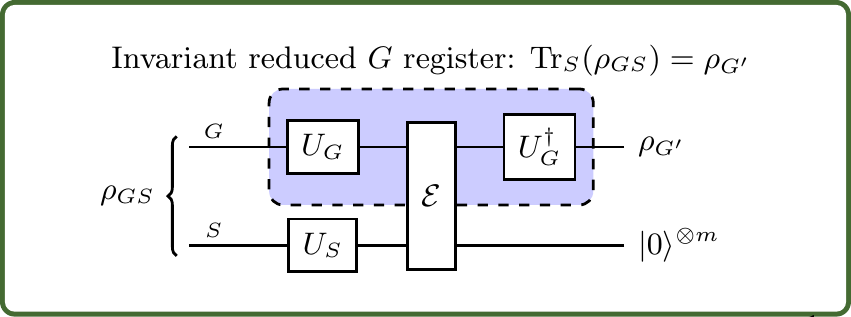}
    \caption{The erasure process acting on an initial state $\rho_{GS}$ must leave the reduced state of the main register invariant, i.e.\ we require $\tr_S(\rho_{GS})=\rho_{G'}$ (the violet box in the above Circuit must act locally as the identity on $G$). This requirement is necessary to ensure that our on-the-go erasure procedure does not affect the outcome of the algorithm to which it is applied.}
    \label{fig:InvariantGRegister}
\end{figure}

So far we have identified the point at which we optimize the erasure of the auxiliary register: Right after these qubits are not needed anymore but before the computation on the main register is finished.
The global unitary $U$ from Figure~\ref{fig:algorithm_otg_optimized} corresponds to the composition $O_f\circ\left(Q_G\otimes\mathds{1}\right)=U$ from Figure~\ref{fig:standard_hsp_algo}. By $\rho_{GS}$ we denote the state of $GS$ right after $U$.
To apply the result from Lemma~\ref{thm:theoremlidia} we have to determine \textit{where} in $\rho_{GS}$ the entanglement between $G$ and $S$ is. Operationally, this means we need to find local operations $U_G$ and $U_S$ on the main and auxiliary register respectively such that the entanglement between these registers is compressed in well-defined qubits, for example $\ell$ Bell pairs $\ket{\chi}$,
\begin{equation}\label{eq:strategy_compression}
    \rho_{GS} \xmapsto[]{U_G\otimes U_S} \rho_{{G}^{(1)}{S}^{(1)}}\otimes\left(\ket{\chi}\bra{\chi}_{G^{(2)}S^{(2)}}\right)^{\otimes \ell}.
\end{equation}
In our erasure algorithm, the entanglement is always compressed into fully entangled pairs of qubits.\footnote{Alternatively, one could weaken this assumption and consider partially mixed qubits, for which the relative entropy is greater, and therefore (by Lemma~\ref{thm:theoremlidia}), the work cost of erasure is lower.
To be applied optimally, this may require more fine-tuned control of the physical interface in the ``information battery'' part of the quantum computer (see Appendix \ref{appendix:information_battery}).}
In Section~\ref{sec:general_bounds_otg_erasure} we establish bounds on the number of Bell pairs $\ell$ for which the transformation in Eq. \eqref{eq:strategy_compression} can be achieved.
After an optimized erasure of $S$ according to Lemma~\ref{thm:theoremlidia}, the reduced state of the main register is $\rho_{{G}^{(1)}}\otimes(\mathds{1}_{G^{(2)}}/2)^{\otimes \ell}$. Before one can continue the computation with $V=Q_G$, the local transformation $U_G$ has to be undone,
\begin{equation}
    \rho_{{G^{(1)}}}\otimes\left(\frac{\mathds{1}_{G^{(2)}}}{2}\right)^{\otimes\ell}\xmapsto[]{U_G^\dagger}\rho_G',
\end{equation}
where the reduced state on $G$ is unaffected by the erasure $\rho_G'=\tr_S(\rho_{GS})$ (Figure~\ref{fig:InvariantGRegister}). This ensures that the algorithm still produces the same outcome, regardless of the manipulations due to the on-the-go erasure.
An important question we will answer in the next section is about the costs of the unitaries $U_G$ and $U_S$. While in the thermal operations resource theory they are for free, we have to quantify their cost from a computational standpoint.

\section{Results}
\label{sec:results}
Here, we introduce the optimized on-the-go erasure protocols, starting with general bounds for the HSP. Then, we will define a class of modifications realizing these optimizations whose costs we will quantify in terms of the algorithm's width. This section is concluded by a toy example for the period finding algorithm which is a special case of the HSP.

\subsection{General bounds for on-the-go erasure in the HSP}
\label{sec:general_bounds_otg_erasure}
For general unitary transformations $U_G$ and $U_S$ as sketched in the strategy from Section~\ref{sec:strategy} there is an upper bound on how well one can optimize the thermodynamics of the algorithm:
\begin{restatable}[Entanglement upper bound]{theorem}{upperboundthm}
    \label{thm:upperboundthm}
    For an Abelian group $G$ whith hidden subgroup $H$ and indicator function $f:G\rightarrow S$ as in Problem \ref{prob:hsp}, solved by the algorithm from the circuit in Figure~\ref{fig:standard_hsp_algo} the maximal number $\ell_\text{max}$ of Bell pairs between main and auxiliary registers that can be obtained via local unitary operations is 
    \begin{equation}
        \ell_{\text{max}} = \log_2\frac{|G|}{|H|} = -H(S|G)_{\rho_{GS}},
    \end{equation}
    where $\rho_{GS}$ is the state of the computational registers after the oracle operation $O_f$.
\end{restatable}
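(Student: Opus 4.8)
The plan is to compute the state $\rho_{GS}$ explicitly after step~\ref{enum:hsp_algo_2} of the algorithm, read off its Schmidt structure across the $G\,|\,S$ cut, and identify the number of maximally entangled qubit pairs that local unitaries can distill with the (negative of the) conditional entropy $H(S|G)_{\rho_{GS}}$. First I would recall from Appendix~\ref{appendix:hsp_group_theory} that after the Fourier transform and the oracle the joint state is
\begin{equation}
    \ket{\psi}_{GS} = \frac{1}{\sqrt{|G|}}\sum_{g\in G}\ket{g}_G\ket{f(g)}_S .
\end{equation}
Because $f$ is constant on cosets of $H$ and takes distinct values on distinct cosets (Eq.~\eqref{eq:defining_eq_hsp}), grouping the sum by cosets gives
\begin{equation}
    \ket{\psi}_{GS} = \sqrt{\frac{|H|}{|G|}}\sum_{c\,\in\, G/H}\ket{\phi_c}_G\otimes\ket{f(c)}_S, \qquad \ket{\phi_c}_G := \frac{1}{\sqrt{|H|}}\sum_{g\in c}\ket{g}_G ,
\end{equation}
which is already a Schmidt decomposition: the $\ket{\phi_c}$ are orthonormal (disjoint supports in the computational basis) and the $\ket{f(c)}$ are orthonormal. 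Hence $\rho_S$ (equivalently $\rho_G$) has exactly $|G/H| = |G|/|H|$ equal Schmidt coefficients $|H|/|G|$, so $H(S)_{\rho} = H(G)_{\rho} = \log_2(|G|/|H|)$ and, since $\ket{\psi}_{GS}$ is pure, $H(GS)_\rho = 0$, giving $H(S|G)_{\rho_{GS}} = -\log_2(|G|/|H|)$. This establishes the second equality in the theorem.

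Next I would argue the operational statement: local unitaries $U_G\otimes U_S$ can compress the entanglement into exactly $\ell_{\max} = \log_2(|G|/|H|)$ Bell pairs, and no more. For the achievability direction, note the rank of the Schmidt decomposition is $|G|/|H|$; assuming (as the theorem implicitly does, cf.\ the period-finding setup) that this is a power of two, say $2^{\ell_{\max}}$, one can choose $U_G$ mapping $\{\ket{\phi_c}\}$ to the first $2^{\ell_{\max}}$ computational basis states of a block of $\ell_{\max}$ qubits in $G$, and $U_S$ mapping $\{\ket{f(c)}\}$ similarly in $S$; the state then becomes $\bigl(\ket{\chi}\bra{\chi}\bigr)^{\otimes \ell_{\max}}$ on those qubits, tensored with $\ket{0}$'s elsewhere, matching Eq.~\eqref{eq:strategy_compression} with $\rho_{G^{(1)}S^{(1)}}$ trivial. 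For optimality, any local unitary preserves the Schmidt rank and the Schmidt spectrum, hence the entanglement entropy; since $\ell$ Bell pairs carry entropy $\ell$, producing $\ell$ of them from a state of entanglement entropy $\log_2(|G|/|H|)$ forces $\ell \le \log_2(|G|/|H|)$, with equality only when all the entropy is used. Invoking Lemma~\ref{thm:theoremlidia}, the resulting work cost is $H(S|G)_{\rho_{GS}}\,k_BT\ln 2 = -\ell_{\max}\,k_BT\ln 2$, consistent with the claimed $\ell_{\max}$.

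The main obstacle I anticipate is not the entropy computation — which is essentially a coset-counting exercise — but stating the achievability cleanly when $|G|/|H|$ is not a power of two, so that the entanglement is a "fractional'' number of qubits: then one cannot literally write $\rho_{GS}$ as a tensor product of exact Bell pairs via finite-dimensional local unitaries, and the cleanest route is to pass to the asymptotic (i.i.d.) regime already invoked in Lemma~\ref{thm:theoremlidia}, where $\ell_{\max}$ is interpreted as the entanglement rate and standard entanglement-concentration arguments give the bound. A secondary point requiring care is that "local'' here must mean local with respect to the $G\,|\,S$ bipartition (each of $U_G$, $U_S$ may act globally within its register), and that the reference system $R$ plays no role because $\rho_{GS}$ is pure after the oracle; I would remark on this explicitly so the reader sees why the monotonicity of entanglement under LOCC applies directly and gives a matching upper bound.
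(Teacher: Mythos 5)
Your proposal is correct and follows essentially the same route as the paper: both arguments use that the post-oracle state is pure and compute $H(S|G)_{\rho_{GS}}=-H(\rho_G)$ from the coset structure (your Schmidt decomposition over $G/H$ is the same factorization $\rho_G\cong\rho_H\otimes\rho_{G/H}$ carried out in Appendix~\ref{appendix:proofs_hsp_erasure}), with the achievability you sketch being exactly the content of the paper's separate Lemma~\ref{thm:upperboundexistence}. The only point worth noting is that your concern about $|G|/|H|$ failing to be a power of two does not arise in the paper's setting, since $|G|$ is assumed to be a power of two and $|H|$ divides $|G|$ by Lagrange's theorem; your explicit optimality argument via preservation of the Schmidt spectrum under local unitaries is, if anything, slightly more careful than the paper's one-line justification.
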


The formal proof of this statement is outsourced to Appendix \ref{appendix:proofs_hsp_erasure}. Here we sketch it: The key insight is to quantify the entanglement between $G$ and $S$ using the conditional von-Neumann entropy \cite{NJC96-QIT,Cerf1997,NF17}. As the function $f$ from the HSP is constant on cosets $g+H\in G/H$, the only entanglement that is generated by the function oracle $O_f$ comes from a sum over the \textit{different} cosets in $G/H$. Each coset $[g]\in G/H$ contributes to the entanglement by terms of the form $\ket{[g]}_{G/H}\otimes\ket{f([g])}_{S^{(2)}}$. They originate the state right after the function oracle
\begin{equation}
    \rho_{GS}=\frac{1}{|G|}\sum_{[g],[g']\in G/H}\sum_{h,h'\in H}\ket{g+h,f(g)}\bra{g'+h',f(g')}_{GS}.
\end{equation}
The sum over the cosets can be factored out via a local transformation, given by a choice of representative for each coset, that is $U_G\ket{g+h}_G=\ket{[g]}_{G/H}\otimes\ket{h}_H$. Furthermore reordering the computational basis of $\hilbert_S$ such that $\ket{f([g])}_{S^{(2)}}$ has the same computational representation as $\ket{[g]}_{G/H}$, we find
\begin{align}
    \rho_\text{rest}\otimes\sum_{[g]\in G/H}\ket{[g],f([g])}\bra{[g],f([g])}_{G/H,S^{(2)}}\\ =\rho_\text{rest}\otimes\left(\ket{\chi}\bra{\chi}\right)_{G/H,S^{(2)}}^{\otimes \log_2|G/H|}.
\end{align}
This results in a contribution of $\ell_\text{max}=\log_2|G/H|$ Bell pairs. The remaining terms in the sum are not entangled. Along the same lines we show that such a factorization can indeed be realized by unitary operations.

\begin{restatable}[Existence of transformations saturating the bound]{lemma}{upperboundexistence}
    \label{thm:upperboundexistence}
    There exist local unitaries $U_G$ and $U_S$ which saturate the upper bound $\ell_\text{max}$ of Bell pairs which can be factored from the state after the function oracle $O_f$.
\end{restatable}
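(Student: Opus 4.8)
The plan rests on the observation that the state of $GS$ right after the oracle is \emph{pure}: running steps~1--2 of the algorithm on $\ket{0}_G\ket{0}_S$ produces $\ket{\psi}_{GS}=\frac{1}{\sqrt{|G|}}\sum_{g\in G}\ket{g,f(g)}$, so $\rho_{GS}=\pure{\psi}$ and nothing is hidden in a reference system. For a pure state the number of Bell pairs extractable by local unitaries is completely determined by the Schmidt spectrum across the $G{:}S$ cut, so the whole lemma reduces to writing that Schmidt decomposition explicitly and rotating it into standard form. Using the defining property~\eqref{eq:defining_eq_hsp} of $f$ --- constant on each coset $[g]\in G/H$, pairwise distinct across cosets --- I would group the sum by cosets: with $g_{[g]}$ a fixed representative of $[g]$ and $\ket{c_{[g]}}:=\frac{1}{\sqrt{|H|}}\sum_{h\in H}\ket{g_{[g]}+h}$ one obtains $\ket{\psi}=\frac{1}{\sqrt{|G/H|}}\sum_{[g]\in G/H}\ket{c_{[g]}}_G\otimes\ket{f([g])}_S$. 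The $\{\ket{c_{[g]}}\}$ are orthonormal (disjoint cosets), the $\{\ket{f([g])}\}$ are orthonormal by~\eqref{eq:defining_eq_hsp}, and all Schmidt coefficients equal $1/\sqrt{|G/H|}$, so $\ket{\psi}$ is maximally entangled of Schmidt rank $|G/H|$, and the contribution $H(S|G)_{\ket{\psi}}=-\log_2|G/H|$ matches Theorem~\ref{thm:upperboundthm} and Lemma~\ref{thm:theoremlidia}.

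The construction of the local unitaries then has two moves. First $U_G$: pick a transversal $\{g_{[g]}\}$ of $H$ in $G$, identifying $\hilbert_G$ with $\hilbert_{G/H}\otimes\hilbert_H$ via the basis permutation $\ket{g_{[g]}+h}\mapsto\ket{[g]}_{G/H}\ket{h}_H$, which is unitary on $\hilbert_G$ because it permutes an orthonormal basis; under it $\ket{c_{[g]}}\mapsto\ket{[g]}_{G/H}\otimes\bigl(\tfrac{1}{\sqrt{|H|}}\sum_{h}\ket{h}\bigr)_H$, so the $H$-factor carries a fixed state independently of $[g]$ and decouples (one may further rotate it to $\ket{0}_H$, still local and free). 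Second $U_S$: relabel the computational basis of $\hilbert_S$ by any permutation sending $\ket{f([g])}\mapsto\ket{[g]}$ under the \emph{same} encoding of $G/H$ into bit strings used for the $G/H$-subregister; this is possible because $\dim\hilbert_S$ is at least $|G/H|$, the size of the range of $f$. After $U_G\otimes U_S$ the correlated part of the state is the canonical maximally entangled state $\frac{1}{\sqrt{|G/H|}}\sum_{[g]}\ket{[g]}_{G^{(2)}}\otimes\ket{[g]}_{S^{(2)}}$ on $\lceil\log_2|G/H|\rceil$ qubit pairs, tensored with the decoupled product ``rest''.

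Reading off the Bell pairs is then immediate when $|G/H|=2^{\ell_{\text{max}}}$: choosing the $G/H$-to-bitstring encoding to be a bijection onto $\{0,1\}^{\ell_{\text{max}}}$ makes $\frac{1}{\sqrt{2^{\ell_{\text{max}}}}}\sum_{x\in\{0,1\}^{\ell_{\text{max}}}}\ket{x}\ket{x}$ factor qubitwise into exactly $\ell_{\text{max}}$ copies of $\ket{\chi}$, saturating Theorem~\ref{thm:upperboundthm}. The step I expect to need the most care is the case where $|G/H|$ is not a power of two: then an integer number of \emph{clean} Bell pairs cannot be factored, and one must either keep one residual partially entangled pair accounting for the fractional part of $\ell_{\text{max}}$ (consistent with the footnote on partially mixed qubits in Section~\ref{sec:strategy}), or pass to the many-copies regime where $\ket{\psi}^{\otimes k}$ contains $\lfloor k\,\ell_{\text{max}}\rfloor$ exact Bell pairs, recovering $\ell_{\text{max}}$ per copy in the thermodynamic limit in which the average work cost of Lemma~\ref{thm:theoremlidia} is phrased. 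I would also verify the bookkeeping flagged in Section~\ref{sec:strategy}: extend the basis permutations and partial isometries above to honest unitaries on the orthogonal complements, and check that undoing $U_G$ afterwards leaves $\tr_S(\rho_{GS})$ untouched, which holds since it is a unitary on $G$ followed by its inverse.
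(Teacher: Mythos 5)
Your proposal is correct and follows essentially the same route as the paper's own proof: the coset-representative permutation $\ket{g+h}\mapsto\ket{h}\otimes\ket{[g]}$ for $U_G$ and a computational-basis relabeling of $\hilbert_S$ aligning $\ket{f([g])}$ with $\ket{[g]}$ for $U_S$. Your explicit Schmidt-decomposition framing and your handling of the case where $|G/H|$ is not a power of two are welcome additions --- the paper sidesteps the latter by restricting to register dimensions that are powers of two --- but the core construction is identical.
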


There is a caveat to the transformations $U_G$ and $U_S$ saturating this upper bound as in Theorem~\ref{thm:upperboundthm}. In fact, finding the transformations must be at least as difficult as solving the problem for which we run the algorithm in the first place.

\begin{restatable}[No-go for saturating the bound]{theorem}{nogoupperboundthm}
    \label{thm:nogoupperboundthm}
    Any on-the-go erasure protocol applying local unitaries $U_G$ and $U_S$ to factorize the maximum amount $\ell_\text{max}$ of Bell pairs from Theorem~\ref{thm:upperboundthm} can be used to solve the HSP.
\end{restatable}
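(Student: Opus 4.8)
\emph{Proof plan.} It suffices to show that the circuit for $U_G$ can be turned, with only $\mathrm{poly}(\log|G|)$ overhead, into a quantum algorithm that outputs the hidden subgroup $H$; since any on-the-go erasure protocol of the stated type must supply such a $U_G$ for the given instance, this makes precise that finding the protocol is at least as hard as solving the HSP. Purify the state after the oracle $O_f$ (the reference register being trivial in the HSP circuit): writing $\ket{c_{[g]}}:=|H|^{-1/2}\sum_{h\in H}\ket{g+h}$, one has $\ket{\psi}_{GS}=|G|^{-1/2}\sum_{g\in G}\ket{g}_G\ket{f(g)}_S=|G/H|^{-1/2}\sum_{[g]\in G/H}\ket{c_{[g]}}_G\ket{f([g])}_S$. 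Since $f$ is injective on cosets, this is a maximally entangled state of Schmidt rank $|G/H|=2^{\ell_\text{max}}$, so $\rho_G=\tr_S\pure{\psi}=2^{-\ell_\text{max}}\,\Pi_{\mathrm{cos}}$ is the maximally mixed state on $\hilbert_G^{\mathrm{cos}}:=\mathrm{span}\{\ket{c_{[g]}}:[g]\in G/H\}$, the subspace of vectors that are constant on $H$-cosets. (When $|G/H|$ is not a power of two, $\ell_\text{max}$ is to be read in the asymptotic-rate sense of Lemma~\ref{thm:theoremlidia}, or qubit Bell pairs replaced by qudit ones; this does not affect what follows, so we take the power-of-two case.)

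\textbf{Saturation pins down a subspace.} Suppose $U_G\otimes U_S$ realises Eq.~\eqref{eq:strategy_compression} with $\ell=\ell_\text{max}$, so it sends $\ket{\psi}_{GS}$ to $\ket{\phi}_{G^{(1)}S^{(1)}}\otimes\ket{\chi}^{\otimes\ell_\text{max}}_{G^{(2)}S^{(2)}}$. Local unitaries preserve the Schmidt rank across the cut $G{:}S$, which is $2^{\ell_\text{max}}$, while the right-hand side has Schmidt rank $(\text{Schmidt rank of }\ket{\phi})\cdot 2^{\ell_\text{max}}$; hence $\ket{\phi}=\ket{a}_{G^{(1)}}\otimes\ket{b}_{S^{(1)}}$ is a product state. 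Tracing out $S$ then gives $U_G\,\rho_G\,U_G^\dagger=2^{-\ell_\text{max}}\,\pure{a}_{G^{(1)}}\otimes\mathds{1}_{G^{(2)}}$, i.e.\ $U_G$ maps $\hilbert_G^{\mathrm{cos}}$ exactly onto the product subspace $\pure{a}_{G^{(1)}}\otimes\hilbert_{G^{(2)}}$ for some unit vector $\ket{a}$. Consequently $\hilbert_G^{\mathrm{cos}}=U_G^\dagger\big(\pure{a}_{G^{(1)}}\otimes\hilbert_{G^{(2)}}\big)$ is determined by $U_G$, and since $\hilbert_G^{\mathrm{cos}}$ is by definition the span of the uniform superpositions over the cosets of $H$, it determines $H$. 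It remains to perform this extraction efficiently.

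\textbf{Efficient recovery of $H$.} First prepare the reference state $\ket{a}$: apply $Q_G\otimes\mathds{1}$ and $O_f$ to fresh registers, discard $S$, apply $U_G$ to $G$, and discard $G^{(2)}$; by the previous paragraph the $G^{(1)}$-register is then exactly $\pure{a}$ (one oracle call, one use of $U_G$, no error). Next build a membership oracle for $H$ via the Fourier dual. The transform $Q_G$ maps $\hilbert_G^{\mathrm{cos}}$ onto $\mathrm{span}\{\ket{\chi}:\chi\in H^\perp\}$, where $H^\perp:=\{\chi\in\hat G:\chi|_H\equiv 1\}$, so $Q_G^\dagger\ket{\chi}$ lies in $\hilbert_G^{\mathrm{cos}}$ if $\chi\in H^\perp$ and is orthogonal to it otherwise. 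Given a candidate character $\chi$, compute $U_G\,Q_G^\dagger\ket{\chi}$ and run a SWAP test between its $G^{(1)}$-register and a fresh copy of $\ket{a}$: by the structure of $U_G$ above, the test accepts with probability $1$ if $\chi\in H^\perp$ and with probability $\tfrac12$ if $\chi\notin H^\perp$, so $O(\log(1/\varepsilon))$ repetitions give a one-sided, bounded-error membership oracle for the subgroup $H^\perp\leq\hat G$. From such an oracle one computes generators of $H^\perp$ in $\mathrm{poly}(\log|G|)$ time by standard black-box abelian-group algorithms, and $H=(H^\perp)^\perp$ follows by linear algebra over the invariant factors of $G$. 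Composing these steps yields the claimed HSP solver built from the circuit for $U_G$.

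\textbf{Main obstacle.} The structural content — that maximal saturation forces $U_G$ to send $\hilbert_G^{\mathrm{cos}}$ to a product subspace and hence reveals $H$ — is immediate; the work lies in the efficient quantum recovery, and the crux there is the gap of the membership test. A SWAP test applied directly to $\rho_G$ (which is available for free from $O_f$, without $U_G$) would separate $\chi\in H^\perp$ from $\chi\notin H^\perp$ only by $O(2^{-\ell_\text{max}})$, which is useless. What maximal saturation provides is precisely that $U_G$ collapses the $G^{(1)}$-image of $\hilbert_G^{\mathrm{cos}}$ to a single ray $\ket{a}$, against which a character's image either aligns perfectly or is exactly orthogonal, amplifying an exponentially small per-copy signal up to a constant. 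Had only $\ell<\ell_\text{max}$ Bell pairs been factored out, this image would be higher-dimensional and no single reference state would be available; this is why the no-go attaches to \emph{saturating} the bound of Theorem~\ref{thm:upperboundthm} rather than to weaker on-the-go compressions.
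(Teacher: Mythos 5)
Your first half — the structural argument — is correct and in fact more careful than what the paper does. The paper's proof simply assumes that $U_G$ is a relabeling of the computational basis, $U_G\ket{g+h}=\ket{h}\otimes\ket{[g]}$, and then reads off elements of $H$ directly by preparing $U_G^\dagger(\ket{h}\otimes\ket{[0]})$ and measuring in the computational basis: the outcomes are literally elements of $H\subseteq G$, and a few of them generate $H$. You instead derive from Schmidt-rank preservation that \emph{any} saturating $U_G$ must map $\hilbert_G^{\mathrm{cos}}$ onto a product subspace $\pure{a}_{G^{(1)}}\otimes\hilbert_{G^{(2)}}$, which is a genuinely stronger structural statement and closes a loophole the paper glosses over (nothing in the theorem's hypothesis forces $U_G$ to be a basis permutation). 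Up to and including the one-sided SWAP-test membership oracle for $H^\perp$, everything checks out: for $\chi\in H^\perp$ the $G^{(1)}$-marginal of $U_GQ_G^\dagger\ket{\chi}$ is exactly $\pure{a}$, and for $\chi\notin H^\perp$ it is supported orthogonally to $\ket{a}$, so the acceptance probabilities $1$ versus $\tfrac12$ are right.

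The gap is the final step: \textquotedblleft from such an oracle one computes generators of $H^\perp$ in $\mathrm{poly}(\log|G|)$ time by standard black-box abelian-group algorithms.\textquotedblright\ No such algorithm exists. A membership oracle lets you \emph{test} candidates but gives you no way to \emph{produce} them, and $|H^\perp|=|G|/|H|$ can be tiny: take $G=(\Z/2\Z)^n$ and $H$ of index $2$, so that $H^\perp=\{0,x\}$ for a single unknown $x$; locating $x$ from membership queries alone is unstructured search, requiring $\Omega(2^n)$ classical (or $\Omega(2^{n/2})$ quantum) queries. The standard black-box abelian-group routines you have in mind take a \emph{generating set} as input; they do not find a hidden subgroup from membership access. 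The only efficient way to sample elements of $H^\perp$ is Fourier sampling of $\rho_G$ — but that is the original HSP algorithm and makes no use of $U_G$, so it cannot carry the reduction the theorem is after. The repair is to use $U_G^\dagger$ generatively rather than as a tester, which is exactly what the paper's proof does (under its basis-permutation assumption); to keep your added generality you would need to argue how a generic $U_G$ satisfying only the subspace condition lets you \emph{output} elements of $H$ (or a basis of $\hilbert_G^{\mathrm{cos}}$ in a form from which $H$ is readable), and your proposal does not supply that.
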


The underlying reason is that the transformation $U_G$ required for this factorizes the main register $\hilbert_G$ into parts belonging to $H$ and $G/H$,
\begin{equation}
    U_G : \hilbert_G\rightarrow\hilbert_H\otimes\hilbert_{G/H}.
\end{equation}
Essentially this means we have operational access to the elements of $H\subseteq G$ via the inverse operation $U_G^\dagger$. This hints at a relation between the number of Bell pairs we can factorize and the amount of information we have about the solution of our problem.
In a next step we explore how Theorem~\ref{thm:nogoupperboundthm} generalizes to instances where $\ell_\text{max}$ is not reached. What type of partial information is required to factor $\ell\leq\ell_\text{max}$ Bell pairs, and how do we quantify it? 

\subsection{On-the-go erasure and limits with partial information}
\label{sec:otg_erasure_limits}
\begin{figure*}
    \includegraphics[width=\textwidth]{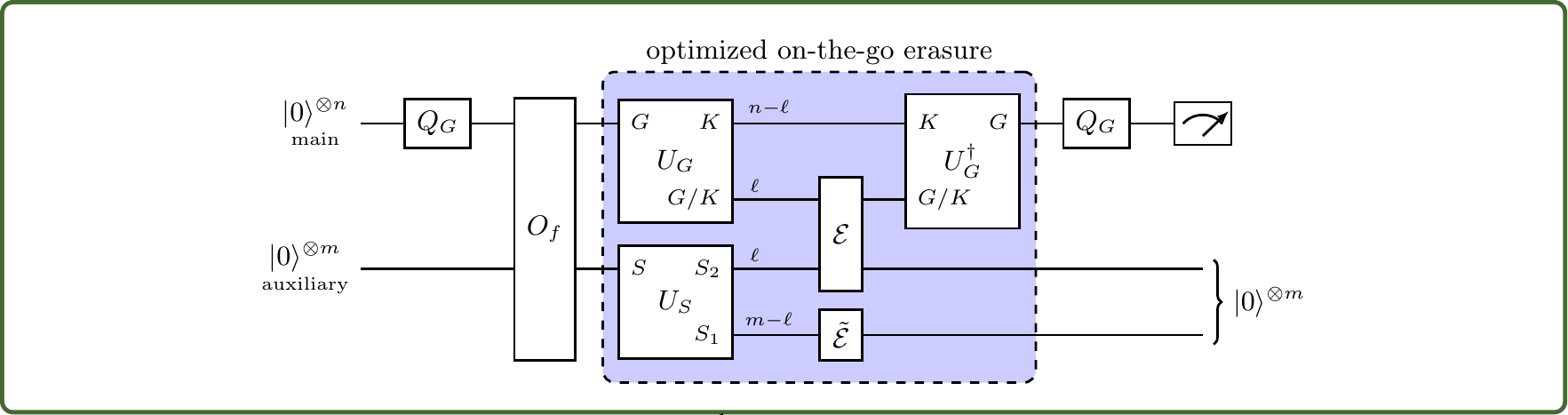}
    \caption{The above circuit is a modified version of Figure~\ref{fig:standard_hsp_algo} implementing the optimized on-the-go erasure of $\ell$ qubits (see shaded part of the diagram). Here we know unitaries $U_G$ and $U_S$ which factor out part of the entanglement between the main and auxiliary register in the form of $\ell$ Bell pairs. The $\ell$ qubits belonging to the auxiliary register are then erased at temperature $T$ with $\Tilde{\mathcal{E}}$ at a total average work cost of $-\ell k_BT\ln2$.
    The steps of the modified algorithm displayed above that are unchanged from the original are grayed out.}
    \label{fig:otg_hsp}
    \setlength{\leftmargini}{1.4cm}
    \begin{enumerate}
        {\color{gray}\item[1 - 2.] Generalized quantum Fourier transform and oracle operation.}
        \item[2a.] Unitary transformation $U_G\otimes U_S\otimes\mathds{1}_B$.
        \item[2b.] Side information erasure $\mathcal{E}$ of $\ell$ qubits, standard erasure $\Tilde{\mathcal{E}}$ of remaining $m-\ell$ qubits.
        \item[2c.] Reverse transformation $U_G^\dagger\otimes U_S^\dagger\otimes\mathds{1}_B$.
        \setcounter{enumi}{2}
        {\color{gray}\item[3 - 5.] Quantum Fourier transform, measurement and classical post-processing.}
    \end{enumerate}
\end{figure*}

\paragraph*{Partial information.} In a first step we characterize the partial information we need to know about the indicator function $f:G\rightarrow S$ such that we are able to factor $\ell\leq\ell_\text{max}$ Bell pairs after the function oracle $O_f$.
We start with the promise of \textit{knowing where} $\ell$ Bell pairs are, that is, we have access to transformations $U_G$ and $U_S$ on $\hilbert_G$ and $\hilbert_S$ which factor out $\ell$ Bell pairs after the function oracle.
The Bell pairs we consider are fully correlated qubits which tells us that the oracle $O_f$ maps some part of $\hilbert_G$ one-to-one on $\hilbert_S$. Formally, this corresponds to a factorization $\hilbert_G\cong\hilbert_{G}^{(1)}\otimes\hilbert_{G}^{(2)}$ and $\hilbert_S\cong\hilbert_{S}^{(1)}\otimes\hilbert_{S}^{(2)}$ with $O_f$ fully correlating the spaces $\hilbert_{G}^{(2)}$ and $\hilbert_{S}^{(2)}$.
This translates into a promise about algebraic properties of $f$ which characterizes what we need to know about $f$ to factor out $\ell$ Bell pairs (Promise~\ref{promise:partialinfovague}).%
\begin{restatable}[General partial information characterization, informal version]{promise}{partialinfovague}
    \label{promise:partialinfovague}
    We need to know a factorization $G\cong G^{(1)}\times G^{(2)}$ and $S\cong S^{(1)}\times S^{(2)}$ with $|G^{(2)}|=|S^{(2)}|=\ell$. Moreover, $f$ must map $G^{(2)}$ one-to-one on $S^{(2)}$.
\end{restatable}
A formalized version of this promise is given in Appendix~\ref{appendix:proofs_hsp_erasure}, Definition~\ref{def:general_local_trsf} and Theorem~\ref{thm:generaltrsfhsp}, together with a proof that Promise~\ref{promise:partialinfovague} is sufficient and necessary for factoring $\ell$ Bell pairs.
For the ease of presentation, we will present here a subclass of partial information which respects the group structure of $G$.
Partial information of this type can be understood as \textit{narrowing down} the search for the subgroup $H\subseteq G$ to a search for $H\subseteq K$ with partial information about the function oracle (see Appendix \ref{appendix:proofs_hsp_erasure} for detailed prescriptions of the transformations). In particular, the specific form in Eq.~\eqref{eq:partial_info_trsf} allows factoring out Bell pairs as outlined in our strategy (Section \ref{sec:strategy}) in Eq.~\eqref{eq:strategy_compression}. 
\begin{restatable}[Partial subgroup information]{promise}{partialsolutionpromise}
    \label{promise:partialsolutionpromise}
    We assume to have access to partial information about the indicator function $f:G\rightarrow S$. That is, we know 
    \begin{enumerate}
        \item an intermediate subgroup $K$ between $H$ and $G$ ($H\subseteq K\subseteq G$) which operationally means to have access to a unitary operation $U_G$ which factors the main register according to
        \begin{equation}
            U_G : \hilbert_G \rightarrow \hilbert_K\otimes \hilbert_{G/K}.
        \end{equation}
        \item where $f$ maps $G/K$ in $S$; operationally that means having access to a unitary $U_S$ such that
        \begin{equation}\label{eq:partial_info_trsf}
            U_G\otimes U_S \ket{g,f(g)}=\ket{k_g,[k]}\otimes \ket{\Tilde{f}(k_g),[k]}.
        \end{equation}
    \end{enumerate}
\end{restatable}

\paragraph*{Optimized on-the-go erasure with partial information.} 

The circuit in Figure~\ref{fig:otg_hsp} implements the modifications due to the transformations $U_G$ and $U_S$ from Promise~\ref{promise:partialsolutionpromise}. This brings a reduction of the work cost of erasure which we quantify in Theorem~\ref{thm:partialsolutionworkcost}.
\begin{restatable}[Work cost of erasure with partial information]{theorem}{partialsolutionworkcost}
    \label{thm:partialsolutionworkcost}
    Given the transformations $U_G$ and $U_S$ from Promise \ref{promise:partialsolutionpromise}, there exists an on-the-go erasure protocol acting on $G$, $S$ and an environment at temperature $T$, resetting the auxiliary register $S$ after $O_f$ while preserving $G$ which does not exceed an average work cost of erasure of
    \begin{equation}
        W = (m-2\ell)k_B T\ln 2,
    \end{equation}
    where $\ell = \log_2|G|/|K|$ and $m=\log_2|S|$ is the number of qubits of the auxiliary register.
\end{restatable}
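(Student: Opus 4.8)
The plan is to turn Promise~\ref{promise:partialsolutionpromise} into an explicit product factorization of the state right after $O_f$, then invoke Lemma~\ref{thm:theoremlidia} twice — once on the factored-out entangled pairs and once, with trivial side information, on the leftover auxiliary qubits — add the two work costs, and finally check that undoing $U_G$ restores the marginal of the main register. This realizes the circuit of Figure~\ref{fig:otg_hsp} (steps 2a--2c).

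First I would write out the state of $GS$ right after the oracle. Starting from the uniform superposition created by $Q_G\otimes\mathds{1}$ and applying $O_f$, the joint register is in the pure state $\ket{\psi}_{GS}=\frac{1}{\sqrt{|G|}}\sum_{g\in G}\ket{g,f(g)}$ (a pure state on $GS$; any non-accessible reference would be carried along untouched). Using the set-level splitting $G\cong K\times(G/K)$ determined by a choice of coset representatives — precisely the data encoded in $U_G$ — together with the promised identity $U_G\otimes U_S\ket{g,f(g)}=\ket{k_g,[k]}\otimes\ket{\tilde f(k_g),[k]}$, I would reindex $\sum_{g\in G}$ as an independent double sum over $k'\in K$ and $[k]\in G/K$, obtaining
\begin{equation}
(U_G\otimes U_S)\ket{\psi}_{GS}=\ket{\phi}_{G^{(1)}S^{(1)}}\otimes\frac{1}{\sqrt{|G/K|}}\sum_{[k]\in G/K}\ket{[k]}_{G^{(2)}}\ket{[k]}_{S^{(2)}},
\end{equation}
where $\ket{\phi}_{G^{(1)}S^{(1)}}=\frac{1}{\sqrt{|K|}}\sum_{k'\in K}\ket{k'}_{G^{(1)}}\ket{\tilde f(k')}_{S^{(1)}}$. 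Identifying the computational bases of $G^{(2)}$ and $S^{(2)}$, the right-hand factor is a maximally entangled state of Schmidt rank $|G/K|$, i.e.\ $\ell=\log_2|G/K|$ Bell pairs $\ket{\chi}^{\otimes\ell}$; the left factor $\ket{\phi}$ is the post-oracle state of the ``smaller'' HSP instance on the group $K$ with restricted indicator $\tilde f$ (whose hidden subgroup is still $H$), so there is no usable handle on the $G^{(1)}$--$S^{(1)}$ correlations.

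Now the thermodynamic bookkeeping. The $\ell$-qubit register $S^{(2)}$ is maximally entangled with $G^{(2)}$, with conditional entropy $H(S^{(2)}|G^{(2)})=H(G^{(2)}S^{(2)})-H(G^{(2)})=0-\ell=-\ell$ (pure joint state, fully mixed marginal); hence Lemma~\ref{thm:theoremlidia}, applied with side information $G^{(2)}$, target $S^{(2)}$ and reference comprising all of $G^{(1)}S^{(1)}$, resets $S^{(2)}\mapsto\ket{0}\bra{0}^{\otimes\ell}$ at average work cost $-\ell\,k_BT\ln2$ by Eq.~\eqref{eq:landauer_principle_better}, while preserving every other marginal, in particular that of $G^{(1)}G^{(2)}$. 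The remaining $m-\ell$ qubits $S^{(1)}$ are then erased by a fixed brute-force map — Lemma~\ref{thm:theoremlidia} with trivial side information — at average cost at most $(m-\ell)\,k_BT\ln2$ (the fully mixed worst case for $m-\ell$ qubits), again leaving all other marginals untouched. Since $U_G\otimes U_S$ and its inverse are free in the thermal-operations resource theory, the total average work cost of the protocol is at most $(m-\ell)k_BT\ln2-\ell\,k_BT\ln2=(m-2\ell)k_BT\ln2$, as claimed.

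Finally I would check that the protocol does not disturb the algorithm (Figure~\ref{fig:InvariantGRegister}). Because each application of Lemma~\ref{thm:theoremlidia} preserves the state of its reference, and $G$ lies inside the reference in both applications, after erasure the marginal on $G^{(1)}G^{(2)}$ equals $\tr_S[(U_G\otimes U_S)\ket{\psi}\bra{\psi}(U_G\otimes U_S)^\dagger]=U_G(\tr_S\ket{\psi}\bra{\psi})U_G^\dagger=(\tr_{S^{(1)}}\ket{\phi}\bra{\phi})\otimes(\mathds{1}_{G^{(2)}}/2)^{\otimes\ell}$, so $U_G^\dagger$ maps it back exactly to $\tr_S\rho_{GS}=\rho_G'$; meanwhile $S$ ends in the fixed pure state $U_S^\dagger\ket{0}\bra{0}^{\otimes m}U_S$, decoupled from $G$, which is as good as $\ket{0}^{\otimes m}$ for reuse up to one more free unitary. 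The one genuinely non-mechanical step is justifying the displayed factorization from Promise~\ref{promise:partialsolutionpromise}: one has to confirm that $\sum_{g\in G}$ really splits as $\sum_{k'\in K}\sum_{[k]\in G/K}$, that $\ket{\tilde f(k_g)}$ depends only on $k_g$ and the copied label only on $[k]$, and that the choice of representatives changes neither $\ell$ nor the final $G$-marginal — exactly what Definition~\ref{def:general_local_trsf} and Theorem~\ref{thm:generaltrsfhsp} in the appendix establish; granting that, the bound above follows immediately.
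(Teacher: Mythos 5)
Your proposal is correct and follows essentially the same route as the paper's proof (Theorem~\ref{thm:generalpartialsolutionworkcost} in Appendix~\ref{appendix:proofs_hsp_erasure}): use the promised $U_G\otimes U_S$ to factor the post-oracle state into $\rho_{G^{(1)}S^{(1)}}\otimes(\ket{\chi}\bra{\chi})^{\otimes\ell}$, erase $S^{(1)}$ by brute force at $(m-\ell)k_BT\ln2$, erase $S^{(2)}$ with side information $G^{(2)}$ at $-\ell k_BT\ln2$ via Lemma~\ref{thm:theoremlidia}, and uncompute $U_G$ to restore the $G$-marginal. The only cosmetic difference is that you work with the pure state vector while the paper carries the density matrix, and both defer the justification of the factorization to the same appendix result (Theorem~\ref{thm:generaltrsfhsp}).
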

This result also generalizes to partial information from Promise~\ref{promise:partialinfovague}. The only change in the circuit of Figure~\ref{fig:otg_hsp} is that the transformations $U_G$ and $U_S$ have to be replaced by their generalized versions. In Appendix~\ref{appendix:proofs_hsp_erasure}, Theorem~\ref{thm:generalpartialsolutionworkcost} generalizes Theorem~\ref{thm:partialsolutionworkcost}. For a proof, the reader is referred there.

\paragraph*{Oracle simplification with partial information.}
\begin{figure*}
    \centering
    \includegraphics[width=\textwidth]{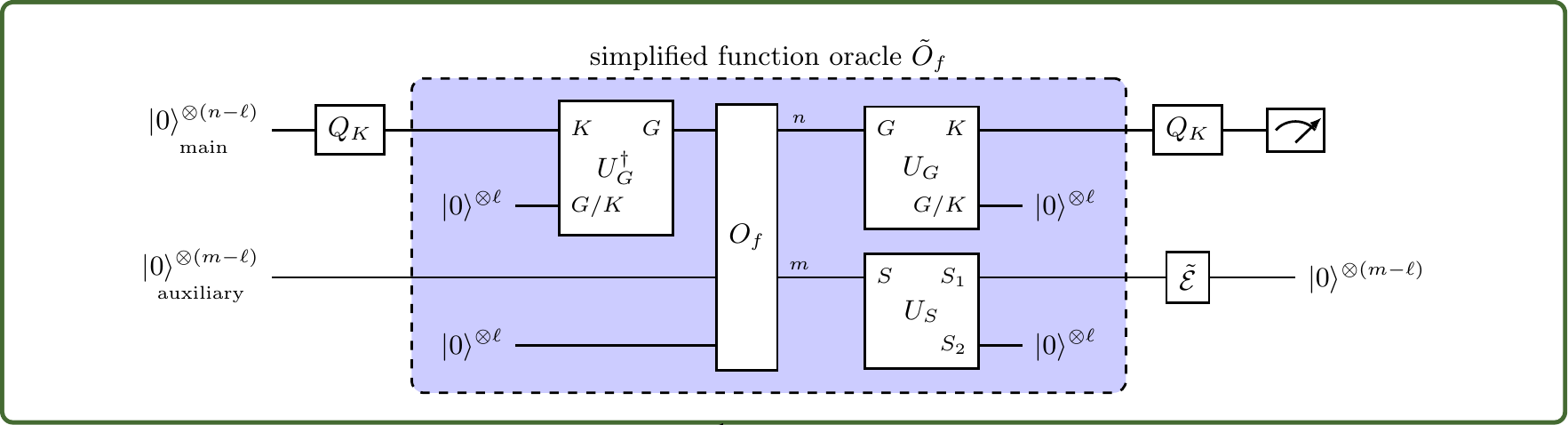}
    \caption{If we have access to partial information about a subgroup $K$, the modified oracle $\tilde{O}_f$ (violet box in the circuit above) can be used instead of the original oracle $O_f$. The $2\ell$ qubits inside the violet box are in the $\ket{0}$ state regardless of the input of the main and auxiliary qubits. All in all, the oracle $\tilde{O}_f$ has $2\ell$ fewer (variable) input qubits than $O_f$. If the function oracle $O_f$ is given with open circuit access (in contrast to a black box), the transformations $U_G$ and $U_S$ can be incorporated into $O_f$, giving a physical reduction of $2\ell$ qubits.
    In comparison to the standard algorithm (Figure~\ref{fig:standard_hsp_algo}) the group $G$ has been replaced by $K$, henceforth, also the generalized quantum Fourier transforms $Q_G$ had to be replaced by $Q_K$. The remaining steps are as in Figure~\ref{fig:standard_hsp_algo}; in the following enumeration they are grayed out, while steps 2a' - 2c' are encapsulated by $\tilde{O}_f$ in the above circuit:}
    \label{fig:simplification_hsp}
    \setlength{\leftmargini}{1.4cm}
    \begin{enumerate}
        {\color{gray}\item Generalized quantum Fourier transform $Q_K$ on $\hilbert_K$,}
        \item[2a'.] Inverse transformation $U_G^\dagger$ on $\hilbert_K\otimes\hilbert_{G/K}$,
        \item[2b'.] Original oracle operation $O_f$,
        \item[2c'.] Transformations $U_G\otimes U_S$,
        \setcounter{enumi}{2}
        {\color{gray}\item[3 - 4.] Generalized quantum Fourier transform $Q_K$ on $\hilbert_K$, and measurement of $K$ register.}
    \end{enumerate}
\end{figure*}

In Section~\ref{sec:otg_erasure_limits} we derived a no-go result (Theorem~\ref{thm:nogoupperboundthm}) for the factorization $\hilbert_G\rightarrow\hilbert_H\otimes\hilbert_{G/H}$ by observing that finding such a factorization is as difficult as finding the hidden subgroup $H\subseteq G$ itself.
With the newly introduced partial information erasure (Promise~\ref{promise:partialsolutionpromise}), how do we now quantify the difficulty of finding the transformations $U_G$ and $U_S$? Put differently: What is the operational significance of the partial information required for the transformations $U_G$ and $U_S$? The following result answers this question.
\begin{restatable}[Partial information correspondence]{theorem}{partialinfocorrespondence}
    \label{thm:partialinfocorrespondence}
    The unitaries $U_G$ and $U_S$ from Promise \ref{promise:partialsolutionpromise} can be used to formally construct a new function oracle $\tilde{O}_f$ which requires $2\ell$ fewer qubits than $O_f$ ($\ell=\log_2|G|/|K|$). Moreover, this modified oracle $\tilde{O}_f$ can still be used to solve the HSP.
\end{restatable}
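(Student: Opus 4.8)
The plan is to spell out explicitly the circuit sketched in Figure~\ref{fig:simplification_hsp}: show that steps 2a'--2c' (namely $U_G^\dagger$, then the original $O_f$, then $U_G\otimes U_S$) can be packaged into a single unitary $\tilde O_f$ which, when fed the fixed state $\ket 0^{\otimes 2\ell}$ on the registers corresponding to $\hilbert_{G/K}$ and $\hilbert_{S^{(2)}}$, acts nontrivially only on the remaining $\log_2|K| + (m-\ell)$ qubits and implements the correct oracle for the restricted HSP on $K$. The heart of the argument is the algebraic identity already packaged in Promise~\ref{promise:partialsolutionpromise}, Eq.~\eqref{eq:partial_info_trsf}: $U_G\otimes U_S\,\ket{g,f(g)} = \ket{k_g,[k]}\otimes\ket{\tilde f(k_g),[k]}$. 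First I would verify that for inputs already of the factorized form $\ket{k,[k]=0}\otimes\ket{0}_{S^{(2)}}$ on the auxiliary space — i.e.\ restricting attention to the coset representative $0\in G/K$ — applying $U_G^\dagger$ sends $\ket{k,0}_G \mapsto \ket{k}_G$ (an element of $K\subseteq G$), then $O_f$ produces $\ket{k, f(k)}_{GS}$, and then $U_G\otimes U_S$ returns $\ket{k,0}_{K,G/K}\otimes\ket{\tilde f(k),0}_{S^{(1)},S^{(2)}}$. Hence, with the $G/K$ and $S^{(2)}$ wires clamped to $\ket 0$, the composite map restricted to the remaining wires is exactly $O_{\tilde f}\ket{k,s} = \ket{k, s\oplus \tilde f(k)}$ — an oracle on $\log_2|K|$ input qubits plus $m-\ell$ target qubits, which is $2\ell$ fewer variable input qubits than $O_f$.

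Next I would check that $\tilde f\colon K\to S^{(1)}$ is itself an indicator function for the hidden-subgroup problem on $K$ with the \emph{same} hidden subgroup $H$. This follows because $H\subseteq K$ by Promise~\ref{promise:partialsolutionpromise}, and for $k,k'\in K$ we have $\tilde f(k)=\tilde f(k')$ iff $f(k)=f(k')$ (the extra $S^{(2)}$ label is constant $=[0]$ on the coset of $0$), which by the defining property~\eqref{eq:defining_eq_hsp} of $f$ holds iff $kH=k'H$ as cosets in $G$, equivalently in $K$ since $H\subseteq K$. Therefore running the standard Abelian HSP algorithm of Figure~\ref{fig:standard_hsp_algo} with $G$ replaced by $K$, $Q_G$ replaced by $Q_K$, and $O_f$ replaced by $\tilde O_f$ recovers $H$. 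This establishes the second claim of the theorem. I would also remark on the distinction the paper draws: if $O_f$ is available only as a black box, $\tilde O_f$ is a ``formal'' construction using one black-box call to $O_f$ sandwiched between $U_G^\dagger$ and $U_G\otimes U_S$ (no net qubit saving in hardware but a conceptually smaller oracle); whereas if we have open circuit access, the $U_G,U_S$ layers can be compiled into the oracle circuit and the $2\ell$ ancillas genuinely removed.

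The main obstacle — really a bookkeeping subtlety rather than a deep difficulty — is making precise what it means for $\tilde O_f$ to ``have $2\ell$ fewer qubits.'' One must be careful that the $2\ell$ wires ($\hilbert_{G/K}$ and $\hilbert_{S^{(2)}}$) are genuinely decoupled: they must return to $\ket 0$ \emph{for every} input on the active registers, not just on the coset representative $0$. This requires checking that on the image of the Fourier-prepared superposition $\frac{1}{|K|}\sum_{k\in K}\ket{k}$ the value $[k]$ in Eq.~\eqref{eq:partial_info_trsf} is identically the trivial coset $[0]$ — which holds precisely because the input group element lies in $K$, so its coset in $G/K$ is trivial. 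I would therefore state the construction as: $\tilde O_f := (U_G\otimes U_S)\circ(O_f)\circ(U_G^\dagger\otimes\mathds 1)$ acting on $\hilbert_G\otimes\hilbert_S$, and then observe that it leaves $\hilbert_{G/K}\otimes\hilbert_{S^{(2)}}$ invariant in the $\ket 0$ subspace, so that it descends to a well-defined unitary $O_{\tilde f}$ on $\hilbert_K\otimes\hilbert_{S^{(1)}}$; the detailed verification of invariance on the relevant subspace is the one step that needs the full strength of Promise~\ref{promise:partialsolutionpromise} and is deferred to Appendix~\ref{appendix:simplification}.
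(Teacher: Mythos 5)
Your proposal is correct and follows essentially the same route as the paper's proof in Appendix~\ref{appendix:simplification}: define $\tilde O_f=(U_G\otimes U_S)\circ O_f\circ(U_G^\dagger\otimes\mathds{1})$, check that the $\hilbert_{G/K}$ and $\hilbert_{S}^{(2)}$ wires remain in $\ket 0$ because inputs from $K$ carry the trivial coset label, and verify that the algorithm run over $K$ with $Q_K$ still recovers $H$. The only cosmetic differences are that the paper tracks the full density matrix and exhibits $H_K^\perp$ explicitly by computing $Q_K$ on the reduced state, while you reduce to the correctness of the standard HSP algorithm for the indicator function $\tilde f$ on $K$; note also that Eq.~\eqref{eq:partial_info_trsf} only specifies $U_S$ on states of the form $\ket{f(g)}$, so your claim that the restricted map equals $O_{\tilde f}$ for \emph{arbitrary} target inputs $s$ is slightly stronger than what is justified, though harmless here since the target register is initialized to $\ket 0$.
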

With the modified oracle $\tilde{O}_f$ the HSP algorithm can be run on $2\ell$ fewer qubits than with $O_f$. Both the main and auxiliary qubits can be reduced by $\ell$ and in comparison to the circuit in Figure~\ref{fig:otg_hsp}, the quantum Fourier transform on the main register is now implemented for the group $K$ instead of $G$. Details for the proof of Theorem~\ref{thm:partialinfocorrespondence} are in Appendix \ref{appendix:simplification}.
The two constructions we made are thermodynamically equivalent: For the circuit in Figure~\ref{fig:otg_hsp} we have an average work cost of erasure equal $(m-2\ell)k_B T \ln2$ due to an erasure of $\ell$ auxiliary qubits which were fully entangled to the main register. In the simplified algorithm using modified oracle from Figure~\ref{fig:simplification_hsp}, $2\ell$ fewer qubits are required to run, hence, the average work cost of erasure is also $(m-2\ell)k_B T\ln2$.
The two constructions also produce the same computational output; in Appendix~\ref{appendix:simplification} it is shown that the construction for Theorem~\ref{thm:partialinfocorrespondence}, given in the circuit of Figure~\ref{fig:simplification_hsp} is sufficient for finding the hidden subgroup $H$.
The simplification due to the modified oracle $\tilde{O}_f$ (see Figure~\ref{fig:simplification_hsp}) can be categorized in two ways:
\begin{enumerate}
    \item $O_f$ is given as a \textbf{black box}: The simplification $\tilde{O}_f$ is a formal construction of an existence result.
    \item $O_f$ is given with \textbf{open circuit access}: The transformations $U_G$ and $U_S$ can be incorporated into the oracle $O_f$, and the new oracle requires $2\ell$ fewer physical qubits.
\end{enumerate}
\subsection{Special cases of the hidden subgroup problem}
\subsubsection{Toy example with period finding}
\label{sec:toy_example_pfa}
In this simple example, the on-the-go erasure protocol is straightforward.
The period finding algorithm (PFA) is concerned with the following problem, which is a special case of the HSP:
\begin{figure*}
    \subfloat[\label{fig:detailed_pfa_optimized} On-the-go erasure for the period finding algorithm.]{\includegraphics[width=\textwidth]{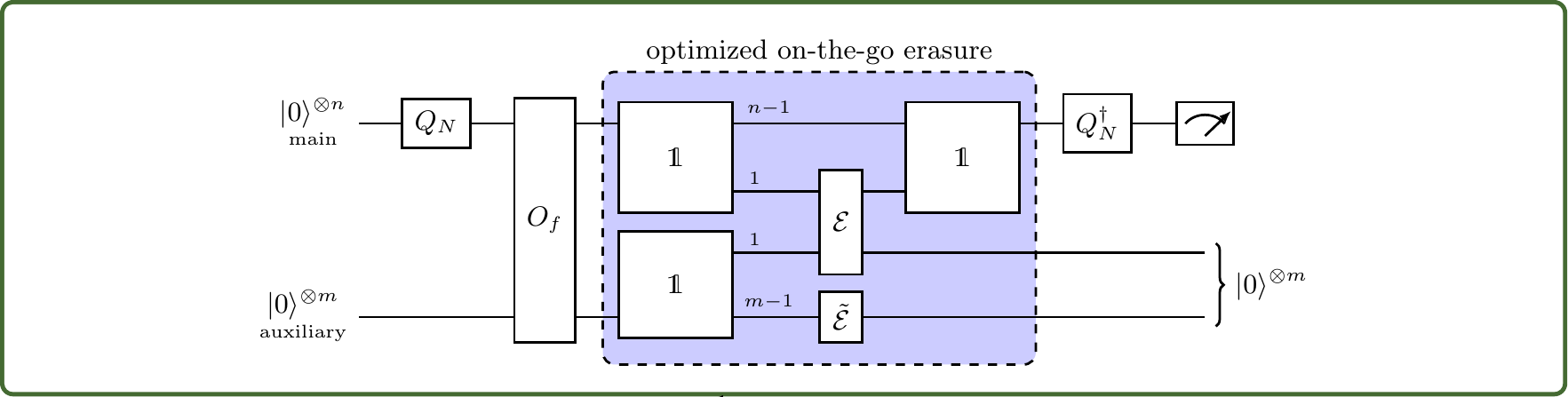}}\\
    \subfloat[\label{fig:detailed_pfa_simplified} Oracle simplification for the period finding algorithm.]{\includegraphics[width=\textwidth]{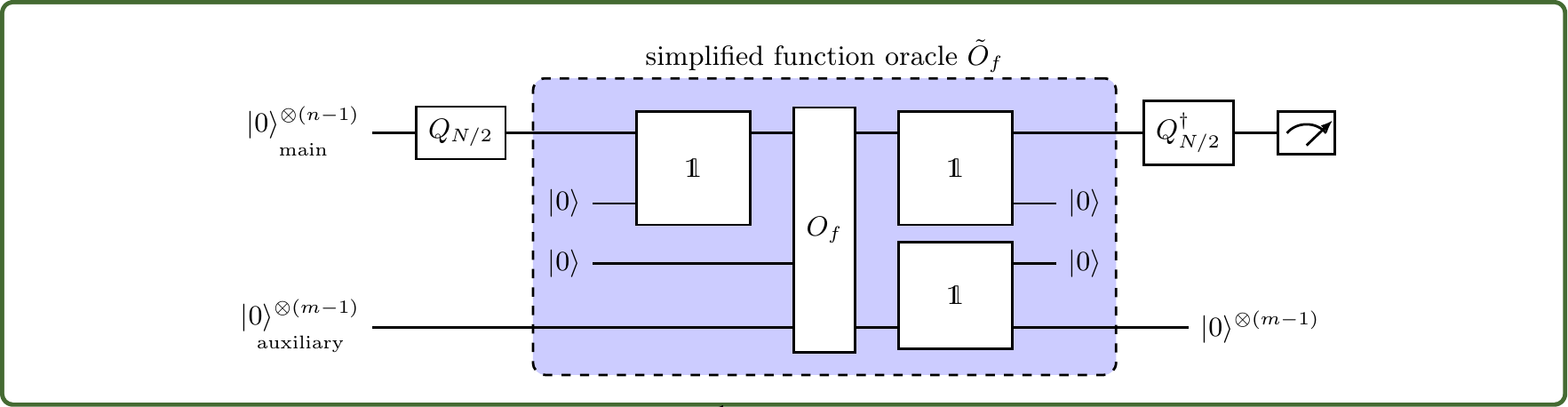}}\\
    \caption{\textit{Two-qubit optimizations to the period finding algorithm.} (\ref{fig:detailed_pfa_optimized}) Here, the circuit describing the optimized on-the-go erasure of the least significant qubits in the period finding algorithm is shown. Given the promise that the function $f$ maps even numbers to even numbers, the oracle $O_f$ fully entangles the least significant qubits of main and auxiliary register. The auxiliary qubit of this pair can then be erased at a negative average work cost of $-k_BT\ln2$ if the erasure is performed at temperature $T$. (\ref{fig:detailed_pfa_simplified}) At equivalent thermodynamic costs, the period finding algorithm can alternatively be simplified to an algorithm which uses $2$ qubits fewer by keeping the least significant qubits of main and auxiliary register constantly in the ground state $\ket{0}$.}
    \label{fig:two_qubit_erasure_pfa}
\end{figure*}

\begin{problem}[Period finding problem]
\label{problem:period_finding}
    Given a function $f:\Z_N\rightarrow\Z_M$ which is $r$-periodic and injective on each period of length $r$, the goal is to find $r$.
\end{problem}
The quantum algorithm solving this problem is of the same form as the HSP algorithm, with $G=\Z_N$, $S=\Z_M$ and $Q_G$ replaced by the standard quantum Fourier transform $Q_N$. The main register uses $n=\log_2N$ and the auxiliary register uses $m=\log_2 M$ qubits. After the first two steps the quantum state of main and auxiliary register equals
\begin{equation}\label{eq:pfa_state_after_oracle}
    \rho_{GS} = \frac{1}{N}\sum_{i,j=0}^{N-1}\ket{i,f(i)}\bra{j,f(j)}_{GS}.
\end{equation}
Suppose we were in possession of partial information about the function $f$, in form of a promise.%
\begin{promise}
    \label{promise:even_to_even}
    The function $f$ can be written in the form $f(2x(+1))=2\tilde{f}(x)(+1)$, for some other function $\tilde{f}:\Z_{N/2}\rightarrow\Z_{M/2}$. In particular, it maps even numbers to even numbers and odd to odd.
\end{promise}
This example was first proposed in \cite{Rio_2011} and it is a special case of Promise~\ref{promise:partialsolutionpromise}; here, we go through the calculations of the optimized on-the-go erasure (Figure~\ref{fig:detailed_pfa_optimized}) and  provide an explicit simplification of the function oracle (Figure~\ref{fig:detailed_pfa_simplified}).
First of all, if $f$ maps even to even numbers and odd to odd, the least significant qubits of main and auxiliary register in Eq.~\eqref{eq:pfa_state_after_oracle} are always fully entangled. By reordering them, we can write
\begin{align}
    \rho_{GS}&=\frac{1}{N/2}\left(\sum_{i,j=0}^{N/2-1}\ket{i,\tilde{f}(i)}\bra{j,\tilde{f}(j)}_{G^{(1)}S^{(1)}}\right)\\&\quad\otimes\frac{1}{2}\left(\sum_{k,\ell=0}^1\ket{kk}\bra{\ell\ell}_{G^{(2)}S^{(2)}}\right)\\
    &=\rho_{{G}^{(1)}{S}^{(1)}}\otimes\ket{\chi}\bra{\chi}_{G^{(2)}S^{(2)}},
\end{align}
and apply the result from Lemma~\ref{thm:theoremlidia} to the Bell state $\ket{\chi}$. Since the reduced main register is unaffected by this erasure, the algorithm still works to determine the period $r$. In this case, the local unitary operations $U_G$ and $U_S$ with the purpose to compress the entanglement between main and auxiliary register into well defined qubits can be chosen to be trivial, $U_G=\mathds{1}_G$ and $U_S=\mathds{1}_S$ (Figure~\ref{fig:detailed_pfa_optimized}). Ultimately, the reason for this was that (part of) the entanglement was between well-known qubits: the least significant ones. In general, however, this cannot be assumed to be the case.\par
How much worth is the partial information in Promise \ref{promise:even_to_even} in terms of computational complexity? An alternative usage of the partial information is to run the PFA not for the function $f:\Z_N\rightarrow\Z_M$ but rather $\tilde{f}:\Z_{N/2}\rightarrow\Z_{M/2}$ with $\tilde{f}(x)=f(2x)/2$. This algorithm requires $2$ fewer qubits to run. Operationally, we simply don't let the PFA act on the least significant qubits, we replace the quantum Fourier transform $Q_N$ by $Q_{N/2}$ and we let the function oracle act on all but the least significant qubits (Figure~\ref{fig:detailed_pfa_simplified}).
\subsubsection{Discrete logarithm problems}\label{sec:discrete_logarithm}
Another special case of the HSP is the discrete logarithm problem, which has applications in classical public-key cryptography.
\begin{problem}[Discrete logarithm problem]
Given the cyclic group $S=\{1,\gamma,\dots,\gamma^{N-1}\}$ of order $N$ with generator $\gamma$ and some element $A\in S$. The question is which $a\in \Z/N\Z$ satisfies $\gamma^a=A$.
\end{problem}
This problem can be rephrased as a HSP (see \cite{dewolf2019quantum} for a pedadogical derivation) by introducing the group $G=\Z/N\Z\times\Z/N\Z$ and a function
\begin{equation}
    f : G\rightarrow S;\,(i,j)\mapsto \gamma^iA^{-j}.
\end{equation}
The function $f$ is a homomorphism of groups: Let $(i,j),(k,\ell)\in G$, then
\begin{align}
    f((i,j)+(k,\ell))&=f(i+k,j+\ell)\\
    &=\gamma^{i+k}A^{-j-\ell}=\gamma^i\gamma^kA^{-j}A^{-\ell}\\
    &=f(i,j)f(k,\ell).
\end{align}
The discrete logarithm is now solved by finding the hidden subgroup $H=\langle\langle (a,1)\rangle\rangle\subseteq G$. In this formulation, the on-the-go erasure protocol is again applicable, given that partial information in the form of Promise \ref{promise:partialsolutionpromise} is available. This could again be the case in form of an intermediate subgroup $H\subseteq K\subseteq G$.

\section{Discussion}
\label{sec:discussion}
\paragraph*{Summary.} In the resource theory of thermodynamics we optimized the erasure costs of erasing auxiliary qubits in the algorithms solving the HSP. To achieve this, we applied the result from \cite{Rio_2011}, which states that quantum side-information in the form of entanglement can be used as a resource to reduce the cost of erasing quantum systems. Lastly, we quantified the cost of using said side-information in terms of a trade-off: the side-information could be used to reduce the algorithm width, at equal thermodynamic costs.

\paragraph*{Applicability.} Our work has treated three possibilities to erase auxiliary qubits in a quantum algorithm. When considering our proposal for an optimized on-the-go erasure of $\ell$ qubits for application, the following costs have to be weighted against each other. On-the-go erasure versus:
\begin{enumerate}
    \item \textbf{Straightforward erasure: } Given the architecture of the quantum computer, does the work cost reduction by $2\ell k_B T\ln2$ outweigh the gate costs of the local unitaries $U_G$ and $U_S$?
    \item \textbf{Bennett's uncomputing:} What restrictions does the quantum computer put on the algorithm's width and what is the gate cost of implementing many parallel copies of the original circuit together with a quantum version of the classical post-processing and a reversible majority vote compared to the gain of $(m-2\ell) k_B T\ln2$? Considering current gate costs or even fundamental limits \cite{CG21}, this method is unlikely to yield a thermodynamic advantage in any practical scenario.
\end{enumerate}
The toy example for the PFA demonstrates that there are cases where the local transformations $U_G$ and $U_S$ are trivial, hence they do not add any complexity to the algorithm, giving the optimized on-the-go erasure a strict advantage over approaches (1) and (2).
Last but not least, the optimized on-the-go erasure has to be compared to another option:
\begin{enumerate}
    \item[3.] \textbf{Oracle simplification:} Is $O_f$ given with open circuit access or is it given as a black box?
\end{enumerate}
If the oracle is available with open circuit access, the simplification comes with a decrease of complexity, making the algorithm use $2\ell$ fewer qubits. For a black box, the complexity is roughly the same, with the difference coming from the quantum Fourier transform which has to be performed on $\ell$ fewer qubits. At the level of thermodynamic costs, both options are equivalent.
\paragraph*{Complexity implications.} 
Depending on the type of partial information available to perform the on-the-go erasure, the complexity of the transformations $U_G$ and $U_S$ can range from being exponential in the input size to being almost trivial.
The reason for this is that Theorem~\ref{thm:partialsolutionworkcost} (together with Promise~\ref{promise:partialsolutionpromise}) is an existence result and the complexity of the transformations depends on the particular choice of computational basis representation of the states $\ket{g}$ and $\ket{s},$ for $g\in G$ and $s\in S$ respectively.
In the scenario, where we have access to side information in the form of an intermediate subgroup $K,$ such that $H\subseteq K\subseteq G,$ there is no \textit{a priori} reason for the Hilbert space $\hilbert_K=\operatorname{span}_\C\{\ket{k} : k\in K\}$ to be represented by a subregister of qubits of the main register $\hilbert_G.$ In the most general case, a unitary transformation is required to permute   basis elements and ensure $\hilbert_K$ is encoded on a subset of qubits of the main register. This transformation (which in matrix form only has $0$ and $1$ elements) has exponential gate complexity $O(n2^n)$ \cite{Shende2003}, in the general case.
This is not to say that the transformations $U_G$ and $U_S$ cannot be implemented efficiently. There are cases where the computational basis representation for $G$ already ensures that $\hilbert_K$ is implemented on a subset of the main register's qubits. In these cases, $U_G$ only has to permute qubits and has thus gate complexity bounded by $O(\log|K|)$. For example, in the PFA, this is the case for all subgroups of $G=\Z/N\Z$ generated by powers of $2$ (Section \ref{sec:toy_example_pfa}), and for the discrete logarithm for all subgroups of $G=\Z/N\Z\times \Z/N\Z$ of the form $\langle\langle2^k\rangle\rangle\times\Z/N\Z$ (Section~\ref{sec:discrete_logarithm}). For the transformation $U_S$ to satisfy similar complexity bounds, the target space's computational representation has to be decomposed analogously to the main register; this is discussed in more detail in Appendix~\ref{appendix:gate_complexity_UG_US}.

\paragraph*{Outsourcing thermodynamic processes to an information battery.} 
In this presentation, all qubit erasure processes take place in the computational registers. It is possible to outsource this thermodynamic task to an external battery register \cite{Horodecki2013,Andolina2019,2019review_lostaglio,LipkaBartosik2021secondlawof}.
The battery consists of \textit{fueled} qubits in state $\ket{0}$ and \textit{depleted} qubits in the fully mixed state $\mathds{1}/2$; the erasure of depleted qubits takes place there at temperature $T$, with an average work cost of $k_BT\ln2$ per qubit. The idea is that when we identify pure or fully mixed qubits that need erasure, we exchange them with those in the battery.
That way, all thermodynamic processes that require interaction with an environment are take place in the battery, protecting the main registers from dissipation. The price of using a battery is the need for additional SWAP gates between the computational registers and the battery, which depending on the hardware architecture may be costly.
Since the information battery does not have an effect on the number of qubits that need to be erased in a quantum computation,  further discussion is outsourced to Appendix~\ref{appendix:information_battery}.

\paragraph*{Relation to algorithmic cooling.}
Algorithmic cooling is the process of producing cold (that is, approximately pure) qubits \cite{Park2016}. There are approaches that extract entropy from a target system by coupling it to thermal baths in an approach called heat-bath algorithmic cooling \cite{AA19,Soldati2021,Serafini2020}. Our optimizations in erasure distinguish themselves from algorithmic cooling in that they are not primarily about the production of pure qubits but rather about reducing the thermodynamic cost of said erasure using entanglement as a further resource. When outsourcing the erasure process into an external information battery (see Appendix \ref{appendix:information_battery}), one could apply algorithmic cooling there to produce pure battery states.

\paragraph*{Single-shot and finite-size effects.} In our analysis, we have simplified the work cost of erasing a single qubit. Using the von Neumann entropy to quantify the work cost of erasure is an approximation valid in the asymptotic i.i.d.\ limit; for any finite number of rounds smooth entropies are more precise measures of work and heat in erasure \cite{Rio_2011,Egloff2015}. If the rest Hamiltonian of the qubits is not fully degenerate, one needs to employ single-shot versions of the free energy \cite{Aaberg2013}; if we want to account for finite-size effects (either on the environment, on thermalizing operations, or energy gaps allowed in intermediate stages of erasure), further corrections are necessary to find the exact work cost as a random variable \cite{Reeb2014,Richens2018}. All these corrections can be applied on top of our results: as mentioned in the introduction, our focus is minimizing the number of qubits that need to be erased through interaction with a thermal environment. The exact cost of that erasure can then be computed in the appropriate regime using some of the corrections above; which ones are relevant depends on the hardware architecture. Similarly, the hardware will determine the actual thermodynamic cost of individual unitary gates, which affects the calculation of whether is better to perform erasure on-the-go or to simplify the circuit. 

\paragraph*{Open questions.} 
A natural follow-up project is to study on-the-go erasure for arbitrary quantum algorithms. Within this setting, one could attempt to generalize the no-go result and the trade-off found in this paper for the HSP algorithm.
In that general setting it would also be interesting to explore automatization of the search for optimized erasure (or algorithm simplification) points, for example using entanglement detection \cite{Guehne2009,Li2021}, without affecting the state of the main register.

\addtocounter{subsection}{1}
\subsection*{Acknowledgements} 
We thank Marcus Huber for discussions and feedback, and we also thank the anonymous referees from PRA for helping us to improve the manuscript.
FM acknowledges the SEMP scholarship from Movetia for his research stay abroad and the stipend from the QUIT group at TU Vienna.
LdR acknowledges support from the Swiss
National Science Foundation through 
SNSF project No.\ $200020\_165843$, from the Quantum Center of ETH Zurich, and from the FQXi large grant RFP-CPW-2009 Consciousness in the Physical World.
This paper is the result of FM's semester project as a masters student. The technical contributions (including proofs and algorithm proposals) and first draft of the paper are his. LdR proposed the original idea, supervised the project, and revised the manuscript. 


\appendix

\addcontentsline{toc}{section}{\sc{Appendix}}

\section{Physics background: erasure and information battery}
\label{appendix:erasure}
This first appendix is dedicated to providing an explicit protocol for erasing a fully mixed qubit at the Landauer limit (Appendix \ref{appendix:erasure_one_qubit_protocol}) and to review the basics of an information battery in quantum computing (Appendix \ref{appendix:information_battery}).

\subsection{Explicit thermodynamic protocol for erasure of a fully mixed qubit}\label{appendix:erasure_one_qubit_protocol}
Erasing a fully mixed qubit, that it mapping $\mathds{1}/2\mapsto \ket{0}\bra{0}$ comes with diverging resource costs by the third law of thermodynamics \cite{Nernst1906_chem_glgw,Nernst1906_max_arb,PhysRevX.7.041033} which has been established in quantum thermodynamics as well, with diverging resource costs being time, energy or control complexity \cite{2018freitasLluis,2019clivaz,Taranto2021}. Here we showcase a protocol \cite{Skrzypczyk2014} which asymptotically implements the erasure of a qubit. The setup for the erasure consists of three quantum systems.
\paragraph*{Qubit.}
The system of the qubit is described by the two dimensional Hilbert space $\mathcal{H}_S=\C^2$ with basis $\{\ket{0}_S,\ket{1}_S\}$.
Furthermore, it is assumed that the energy levels of this system are degenerate, this is achieved with the Hamiltonian $H_S=0$.
\paragraph*{Work storage.}
In an idealized scenario the work storage consists of an infinite number of evenly spaced, non-degenerate energy eigenstates $\mathcal{H}_W=\{\ket{E_k} : k\in\mathbb{Z}\}$.
The Hamiltonian in given by $H_W=\sum_{k\in\mathbb{Z}}k\Delta\ket{E_k}\bra{E_k}$ with $\Delta$ the energy spacing between two neighbouring levels $\ket{E_k},\ket{E_{k+1}}$.
An experimental realization will only be able approximate this system with energy levels bounded from below.
Because the explicit implementation of the qubit erasure is not relevant for the remaining treatment of the online erasure, we will not investigate this any further.
\paragraph*{Heat bath.}
The heat bath is an ensemble of $N$ qubits thermalized at a temperature $\beta=1/k_BT$ where each qubit has a different energy spacing.
The Hilbert space is $\mathcal{H}_\text{bath}=(\C^2)^{\otimes N}$, with basis $\{\ket{0}_\ell,\ket{1}_\ell\}$ for the $\ell$-th factor.
The Hamiltonian governing the dynamics of the system is $H_\text{bath}=\sum_{\ell=1}^N \ell\Delta\ket{1}\bra{1}_\ell\otimes\mathds{1}_{i\neq\ell}$.
The energy $\Delta$ is the same as for the work storage.
Requiring that the qubits of the heat bath are at a temperature $\beta$ gives the thermal state of $\mathcal{H}_\text{bath}$ to be
\begin{equation*}
    \tau(\beta)=\frac{e^{-\beta H_\text{bath}}}{Z},\qquad Z=\tr\left(e^{-\beta H_\text{bath}}\right).
\end{equation*}
\paragraph*{Erasure of a qubit.}
In a first example we consider the erasure of one fully mixed qubit $\rho=\mathds{1}/2$ in $\mathcal{H}_S$.
For a heat bath consisting of $N$ qubits an erasure is performed in $N$ steps.
In step $\ell$ ($1\leq \ell\leq N$) the qubit from $\mathcal{H}_S$ is swapped with the $\ell$-th qubit from the heat bath $\mathcal{H}_\text{bath}$ and simultaneously the energy level of the work storage is lowered by $\ell$ steps to preserve energy. The unitary operation implementing this step is
\begin{widetext}
    \begin{align}
        U^{(\ell)}=&\sum_{k\in\Z}\left\{\vphantom{\sum_{i}^N}\ket{E_{k+\ell},1_S,0_\ell}\bra{E_{k},0_S,1_\ell}+\ket{E_{k},0_S,1_\ell}\bra{E_{k+\ell},1_S,0_\ell}\right\} \otimes \mathds{1}_{i\neq\ell} \label{eq:erasure_unitary_ell}\\
        &+\mathds{1}_B\otimes\Big(\ket{0_S,0_\ell}\bra{0_S,0_\ell}+\ket{1_S,1_\ell}\bra{1_S,1_\ell}\Big)\otimes\mathds{1}_{i\neq\ell},
    \end{align}

\end{widetext}
and it commutes with the Hamiltonian of the joint system of the work storage, qubit and heat bath. The energy level diagram in Figure~\ref{fig:qubit_erasure} (adapted from \cite{Skrzypczyk2014}) visualizes this unitary operation for $\ell=3$.

\begin{figure}
    \includegraphics[width=\columnwidth]{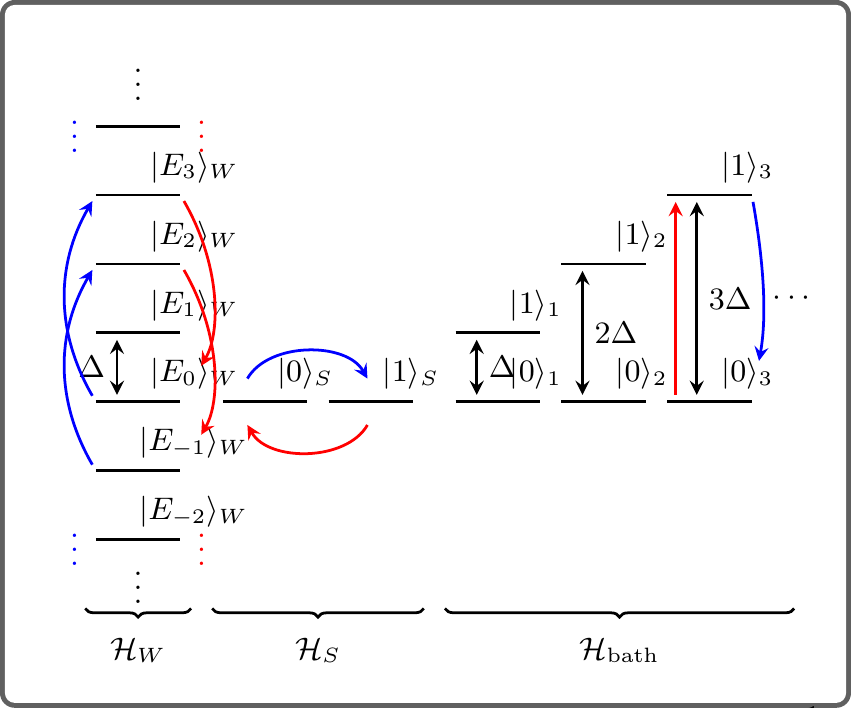}
    \caption{\textit{Qubit erasure energy diagram.} Energy level diagram for the erasure setup of one qubit. From left to right are work storage, qubit system and heat bath. The curved arrows visualize the swapping operation done by the unitary $U^{(3)}$ from Eq.~\eqref{eq:erasure_unitary_ell}.}
    \label{fig:qubit_erasure}
\end{figure}

The erasure process is the composition $U=U^{(N)}\cdots U^{(2)}U^{(1)}$. With the initial state
\begin{equation*}
    \rho_i=\ket{E_0}\bra{E_0}\otimes \frac{\mathds{1}_S}{2}\otimes \tau(\beta),
\end{equation*}
the erasure $U$ acts on $\rho_i$ such that after the erasure we are left with the reduced state of the $S$ register
\begin{equation*}
    \tr_{B,\text{bath}}\left(U\rho_i U^\dagger\right)\propto\left(\ket{0}\bra{0}_S+e^{-N\beta\Delta}\ket{1}\bra{1}_S\right).
\end{equation*}
For a large number $N$ of heat bath qubits, this process corresponds to an erasure of the qubit in system $S$: The fully mixed state $\mathds{1}_S/2$ is mapped to $\ket{0}\bra{0}_S$ asymptotically as $N\rightarrow\infty$. In this process, the work storage performs a work of $W=k_B T\log 2$ for the erasure which is dissipated as heat into the bath. In the more general case, where the system qubit is not necessarily a fully mixed state but rather $\rho_S=(1-p)\ket{0}\bra{0}_S+p\ket{1}\bra{1}_S$, the erasure unitary $U$ can be truncated which leads to a lower erasure cost of $W(S)=H(S)k_BT\log 2$ with $H(S)$ the von Neumann entropy of the system $S$. This is an explicit realization of the result from \cite{Rio_2011} for a single qubit. For many qubits this process can be performed on each qubit individually.

\subsection{Using an information battery inside the quantum computer}
\label{appendix:information_battery}
\paragraph*{Qubit battery registers.} For the purpose of this work it suffices to consider two types of battery registers: One register $\hilbert_{B}^\text{(depleted)}$, containing only fully mixed qubits
\begin{equation}
    \rho_{B}^\text{(depleted)}=\left(\frac{\mathds{1}}{2}\right)^{\otimes \dim B^\text{(depleted)}},
\end{equation}
which are completely passive \cite{Pusz1978,Skrzypczyk2015}, that is, there exists no unitary operation extracting energy from such a state. The second register $\hilbert_{B}^\text{(fueled)}$ contains only pure qubits,
\begin{equation}
    \rho_{B}^\text{(fueled)}=\left(\ket{0}\bra{0}\right)^{\otimes \dim B^\text{(fueled)}}.
\end{equation}
Using a thermal reservoir at temperature $T$ it is at best possible to extract $k_BT\ln 2$ work from a fueled qubit.
In our modifications to the HSP algorithm, instead of partially erasing Bell pairs in the computational registers, we swap them with a fully mixed and a pure qubit from the battery $\hilbert_B=\hilbert_B^\text{(depleted)}\otimes\hilbert_B^\text{(fueled)}$,

\begin{center}
    \begin{tikzcd}[nodes in empty cells,nodes={inner sep=6pt}]
        \hilbert_G\otimes \hilbert_S \arrow[mapsto,rr,bend left,"\ket{\chi}\bra{\chi}"] & [20pt] & \arrow[mapsto,ll,bend left,"(\mathds{1}/2)\otimes\ket{0}\bra{0}"] \hilbert_B,
    \end{tikzcd}
\end{center}

which amounts to a gain of one pure (fueled) qubit in the information battery. 

\paragraph*{Extraction by Swapping.}
If the entanglement between the main register $G$ and the auxiliary register $S$ is given by fully entangled qubits, for example in the Bell state $\ket{\chi}=(\ket{00}+\ket{11})/\sqrt{2}$, then this state can be replaced by fully mixed qubit for $G$ and a pure qubit for $S$ via a swapping operation. On the reduced $\hilbert_G\otimes \hilbert_S$ register, this is equivalent as a partial erasure of the qubit from $S$, while preserving $G$.
\paragraph*{General Bell pair extraction.}
\begin{figure}
    \includegraphics[width=\columnwidth]{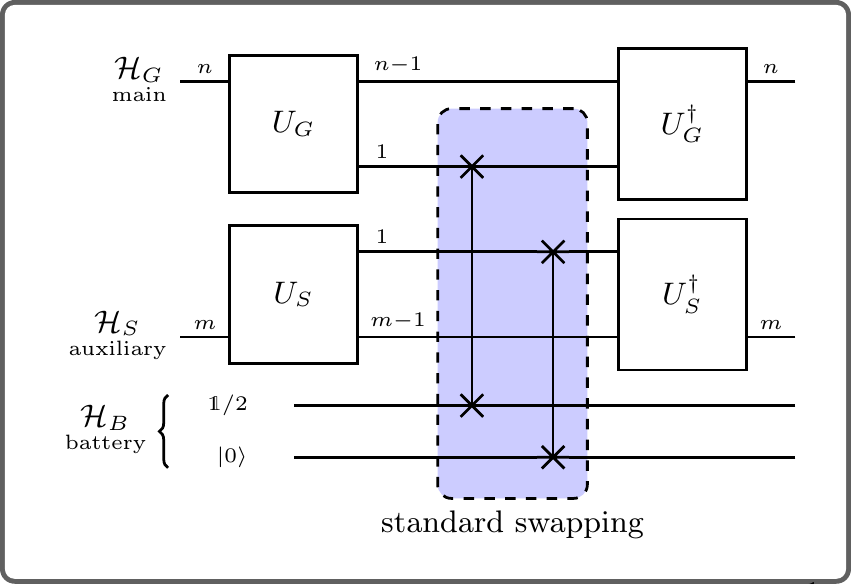}
    \caption{The first transformation brings the state $\rho_{GS}\otimes\left(\mathds{1}/{2}\right)\otimes\ket{0}\bra{0}$ into a factorized form $\rho_{G'S'}\otimes\ket{\chi}\bra{\chi}\otimes\left({\mathds{1}}/{2}\right)\otimes\ket{0}\bra{0},$ in which the Bell pair $\ket{\chi}\bra{\chi}$ is swapped with the partially erased state $\left(\mathds{1}/{2}\right) \otimes\ket{0}\bra{0}$.
    After uncomputing $U_G$ and $U_S$, that is $\rho_{G'S'}\otimes\left({\mathds{1}}/{2}\right)\otimes\ket{0}\bra{0}\otimes\ket{\chi}\bra{\chi}$ is mapped to $\tilde{\rho}_{GS}\otimes\ket{\chi}\bra{\chi},$
    the reduced states of the main register $\tilde{\rho}_G=\tr_S(\tilde{\rho}_{GS})$ is the same as before the operation, $\rho_G=\tilde{\rho}_G$, where $\rho_G=\tr_S(\rho_{GS})$.}
    \label{fig:general_bellpair_swap}
\end{figure}

In general one is not lucky enough for the entanglement between main and auxiliary register to be given in the form of well defined Bell pairs.
Since local unitary transformations of $G$ and $S$ respectively preserve the conditional entropy between these two registers, the entanglement can be spread across many qubits.
For the class of algorithms solving the HSP, we have shown that there always exist local unitaries $U_G$ and $U_S$ such that the entanglement between the registers can be compressed into Bellpairs (see Appendix \ref{appendix:proofs_hsp_erasure}).
Instead of using these unitaries to prepare the registers for being erased as in Figure~\ref{fig:otg_hsp}, they can be used to swap the entangled states with states from the battery.
In Figure~\ref{fig:general_bellpair_swap}, the situation is presented for a single Bell pair swap.
It generalizes to many Bell pairs without any complications.

\section{Proofs: On-the-go erasure in the Hidden Subgroup Problem}
\label{appendix:online}
In Appendix \ref{appendix:hsp_group_theory} we revise the group theoretic basics of the HSP, then in \ref{appendix:standard_hsp_algo} we go through a step-by-step calculation the unmodified quantum algorithm solving the HSP, lastly in \ref{appendix:proofs_hsp_erasure} we deilver the proofs for the theorems of the main body of the paper.
\subsection{Group theory of the Hidden Subgroup Problem}\label{appendix:hsp_group_theory}
In this section an online erasure protocol is constructed for the quantum algorithm \cite{dewolf2019quantum} of the \textit{Hidden Subgroup Problem} (abbr. by HSP).
\hsp*
From now on, $G$ shall be an Abelian group. The HSP for general non-Abelian groups does not yet have an efficient quantum algorithm \cite{dewolf2019quantum}. We diverge from the notation in Eq.~\eqref{eq:defining_eq_hsp} and denote by $g+h\in G$ the element in $G$ obtained by the additive group operation on $g$ and $h$ in $G$. The unit element is $0$. In particular, the cosets with respect to some subgroup $H\subseteq G$ are from now on denoted by $\bar{g}=g+H\in G/H$. We present important definitions and results from group theory \cite{Rotman2015-lv} and representation theory \cite{fulton1991representation} which will be used in the following discussion (formulation and notation of the results from \cite{Rotman2015-lv,fulton1991representation} has been adapted to the specific setting of the HSP at hand).
\begin{theorem}[Classification of finite Abelian groups \cite{Rotman2015-lv}]\label{thm:class_theorem_ab_groups}
For any finite Abelian group $G$ there exist positive integers $a_1,\dots,a_m\in\N$ such that
\begin{equation}
    G\cong \Z/a_1\Z\times\cdots\times\Z/a_m\Z
\end{equation}
and $a_1|a_2|\cdots|a_m$, where $a_i$ for all $1\leq i\leq m$ and $m$ are uniquely determined.
\end{theorem}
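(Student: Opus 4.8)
The plan is to prove existence and uniqueness separately, in both cases first reducing to $p$-groups. For existence, write $|G|=p_1^{e_1}\cdots p_r^{e_r}$ and let $G_{p}=\{g\in G : p^k g = 0 \text{ for some } k\geq 0\}$ be the $p$-primary component. A standard B\'ezout argument (for each $g$, using that $\gcd$ of the prime-power parts of $|G|/p_j^{e_j}$ is $1$) shows $G\cong G_{p_1}\times\cdots\times G_{p_r}$, so it suffices to decompose each finite Abelian $p$-group into cyclic factors with non-decreasing exponents and then recombine.

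For a finite Abelian $p$-group $P$, I would induct on $|P|$. Choose $x\in P$ of maximal order $p^{c}$. The crucial step is the lemma that $\langle x\rangle$ is a direct summand of $P$: take a subgroup $B$ maximal among those with $B\cap\langle x\rangle=0$, and show $P=\langle x\rangle\oplus B$. If not, pick $y\notin\langle x\rangle\oplus B$ with $py\in\langle x\rangle\oplus B$ (possible since $P/(\langle x\rangle\oplus B)$ is a nontrivial $p$-group), write $py=ax+b$, and use maximality of the order of $x$ to see $p\mid a$; then $y':=y-(a/p)x$ satisfies $py'\in B$ and $y'\notin\langle x\rangle\oplus B$, so $\langle B,y'\rangle$ still meets $\langle x\rangle$ trivially, contradicting maximality of $B$. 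Applying the induction hypothesis to $B$ (a smaller $p$-group) gives $P\cong\Z/p^{c_1}\Z\times\cdots\times\Z/p^{c_k}\Z$ with $c_1\leq\cdots\leq c_k=c$. Finally, to assemble the invariant factors: for each prime $p_j$ list the exponents of its cyclic factors in non-decreasing order and pad with zeros on the left so that every prime has the same number $m$ of (possibly trivial) factors; let $a_i$ be the product over $j$ of $p_j$ raised to its $i$-th padded exponent. Then $a_1\mid\cdots\mid a_m$ is immediate, and since the prime-power factors of $a_i$ are pairwise coprime, the Chinese Remainder Theorem recombines $\prod_i\Z/a_i\Z$ back to $\prod_j G_{p_j}\cong G$.

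For uniqueness, observe that $G_p$ is intrinsically defined by $G$, so it is enough to prove that the multiset of exponents $\{c_1,\ldots,c_k\}$ in a cyclic decomposition of a finite Abelian $p$-group $P$ is determined by $P$. For each $t\geq 0$, $p^tP/p^{t+1}P$ is a vector space over $\mathbb{F}_p$ whose dimension equals $\#\{j : c_j>t\}$; these counts, taken over all $t$, recover the multiset $\{c_j\}$. Hence the prime-power building blocks of each $G_{p_j}$ are unique, and since the $a_i$ (and the common length $m$, forced to be the largest number of cyclic factors among the $G_{p_j}$) are obtained from those blocks by the padding-and-multiplying recipe above, the $a_i$ are uniquely determined.

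I expect the main obstacle to be the direct-summand lemma for $p$-groups --- the order-maximality argument that lets one ``correct'' the element $y$ so that a complement survives --- since everything else (primary decomposition, CRT recombination, the $\mathbb{F}_p$-dimension invariants) is routine once that lemma is available. An alternative, at the cost of invoking more machinery, is to deduce the statement directly from the structure theorem for finitely generated modules over the PID $\Z$; the argument above has the advantage of being self-contained.
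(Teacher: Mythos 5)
Your proof is correct, but note that the paper does not prove this statement at all --- it is quoted as a standard background result from Rotman's textbook \cite{Rotman2015-lv}, so there is no in-paper argument to compare against. Your route (primary decomposition via B\'ezout, the direct-summand lemma for an element of maximal order in a finite Abelian $p$-group, CRT recombination into invariant factors, and uniqueness from the $\mathbb{F}_p$-dimensions of $p^tP/p^{t+1}P$) is the standard textbook proof and is sound; the only caveat worth flagging is that uniqueness of $m$ presupposes the convention $a_1>1$ --- your padding construction does yield $a_1>1$, but the theorem's phrasing ``positive integers'' does not explicitly impose it, and without it $m$ is not unique.
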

\begin{proposition}[\cite{fulton1991representation}]\label{prop:cyclic_char}
For each $0\leq k \leq a-1$, the function $\chi_k:\Z/a\Z\rightarrow S^1$ declared by $\chi_k(\ell)=\omega_a^{k\ell}=e^{2\pi i k\ell/a}$ is a character of the irreducible representation
\begin{equation}\label{eq:cyclic_irrep}
    \rho_k : \ell\in\Z/a\Z \mapsto \omega_a^{k\ell}\in\C^*
\end{equation}
of the cyclic group $\Z/a\Z$. In fact, these are all characters of irreducible representations of $\Z/a\Z$.
\end{proposition}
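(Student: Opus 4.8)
The plan is to establish the two assertions of the proposition separately: first, that each $\rho_k$ is a bona fide one-dimensional irreducible representation whose character is exactly $\chi_k$; and second, that as $k$ ranges over $0,\dots,a-1$ these exhaust the irreducible representations of $\Z/a\Z$ up to equivalence.

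For the first part, I would check that $\rho_k\colon \Z/a\Z\to\C^*$ is a well-defined homomorphism. Well-definedness on the quotient follows from $\omega_a^a=e^{2\pi i}=1$: for any representative shift, $\rho_k(\ell+a)=\omega_a^{k(\ell+a)}=\omega_a^{k\ell}(\omega_a^{a})^{k}=\omega_a^{k\ell}$. The homomorphism property is immediate from the exponent laws, $\rho_k(\ell+\ell')=\omega_a^{k(\ell+\ell')}=\omega_a^{k\ell}\,\omega_a^{k\ell'}=\rho_k(\ell)\rho_k(\ell')$. A one-dimensional representation has no nontrivial invariant subspace, so it is automatically irreducible, and its character is its trace, which in dimension one is just the scalar value: $\chi_{\rho_k}(\ell)=\rho_k(\ell)=\omega_a^{k\ell}$. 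Since $|\omega_a^{k\ell}|=1$, this indeed takes values in $S^1$, which settles the first sentence.

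For the completeness statement, I would invoke the standard representation theory of finite groups. Because $\Z/a\Z$ is abelian, every complex irreducible representation is one-dimensional (for instance, by Schur's lemma the center acts by scalars, or from $\sum_i d_i^2=|G|$ with all $d_i=1$, forcing exactly $a$ irreducibles). A one-dimensional representation is a homomorphism $\phi\colon\Z/a\Z\to\C^*$, and it is completely determined by $\phi(1)$; since $a\cdot 1=0$ in $\Z/a\Z$, we must have $\phi(1)^a=1$, so $\phi(1)$ is an $a$-th root of unity. There are precisely $a$ such roots, namely $\omega_a^k$ for $0\le k\le a-1$, and each is realized by $\rho_k$. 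Distinct one-dimensional representations are never equivalent, so these are exactly the $a$ pairwise-inequivalent irreducibles, and hence $\{\chi_k\}_{k=0}^{a-1}$ is the complete list of irreducible characters.

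I do not anticipate a real obstacle here; the only points needing care are the well-definedness on the quotient (which rests on $\omega_a^a=1$) and the appeal to the fact that an abelian group has exactly $|G|$ one-dimensional irreducibles. If one preferred a self-contained completeness argument, one could instead verify the orthogonality relations directly: a geometric-series (Vandermonde) computation gives $\frac{1}{a}\sum_{\ell=0}^{a-1}\chi_k(\ell)\overline{\chi_{k'}(\ell)}=\delta_{kk'}$, so the $a$ functions $\chi_0,\dots,\chi_{a-1}$ are orthonormal, hence linearly independent, in the space of class functions on $\Z/a\Z$, which has dimension $a$ (every element is its own conjugacy class); they therefore form a basis and must be all the irreducible characters.
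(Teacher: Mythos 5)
Your proof is correct and complete: the well-definedness check via $\omega_a^a=1$, the homomorphism property, the automatic irreducibility of one-dimensional representations, and the counting argument (a homomorphism $\Z/a\Z\to\C^*$ is determined by the image of $1$, which must be an $a$-th root of unity) together establish both assertions, and your alternative completeness argument via orthogonality is also valid. Note that the paper itself gives no proof of this proposition --- it is quoted directly from the cited representation-theory reference --- so your argument simply supplies the standard textbook proof that the reference would contain; there is no divergence of approach to report.
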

\begin{proposition}[\cite{fulton1991representation}]\label{prop:char_ab}
The characters of $G\cong \Z/a_1\Z\times\cdots\times\Z/a_m\Z$ (c.f. Theorem~\ref{thm:class_theorem_ab_groups}) are given by
\begin{equation}
    \chi_g(h)=\chi_{g_1}^{(a_1)}(h_1)\chi_{g_2}^{(a_2)}(h_2)\cdots\chi_{g_m}^{(a_m)}(h_m),
\end{equation}
where the factors are as in Proposition~\ref{prop:cyclic_char} and elements $g,h\in G$ are understood as in the decomposition of $G$ into cyclic factors. The irreducible representation $\rho_g$ having $\chi_g$ as character is given by the product $\rho_g=\rho_{g_1}\cdots\rho_{g_m}$ with the factors as in Eq.~\eqref{eq:cyclic_irrep}.
\end{proposition}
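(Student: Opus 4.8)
The plan is to reduce the statement to two standard facts: (i) every complex irreducible representation of a finite abelian group is one‑dimensional, so that its characters coincide with the group homomorphisms $G\to\C^{*}$ (in fact into $S^{1}$, since $G$ is finite); and (ii) homomorphisms out of a direct product split in a controlled way along the factors. First I would invoke Schur's lemma together with the fact that $\C$ is algebraically closed \cite{fulton1991representation} to conclude that for abelian $G$ every irreducible representation $\rho_g$ acts on a one‑dimensional space. Hence $\rho_g$ \emph{is} its own character under the identification $\mathrm{GL}_1(\C)=\C^{*}$, and the set of irreducible characters of $G$ equals $\mathrm{Hom}(G,\C^{*})$. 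This already reduces the final (representation) assertion of the proposition to the character formula, since a product of one‑dimensional representations is again one‑dimensional and its character is the product of the individual characters.

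Next I would prove the product formula by induction on the number $m$ of cyclic factors in the decomposition of Theorem~\ref{thm:class_theorem_ab_groups}. The base case $m=1$ is exactly Proposition~\ref{prop:cyclic_char}, which lists the $a$ characters $\chi_k(\ell)=\omega_a^{k\ell}$ of $\Z/a\Z$. For the inductive step, write $G\cong A\times B$ with $A=\Z/a_1\Z\times\cdots\times\Z/a_{m-1}\Z$ and $B=\Z/a_m\Z$. Given $\alpha\in\mathrm{Hom}(A,\C^{*})$ and $\beta\in\mathrm{Hom}(B,\C^{*})$, a one‑line check shows $(a,b)\mapsto\alpha(a)\beta(b)$ is a homomorphism $A\times B\to\C^{*}$. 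Conversely, for any $\psi\in\mathrm{Hom}(A\times B,\C^{*})$, restrict along the inclusions $a\mapsto(a,0)$ and $b\mapsto(0,b)$ to get $\alpha:=\psi(\,\cdot\,,0)$ and $\beta:=\psi(0,\,\cdot\,)$; then $(a,b)=(a,0)+(0,b)$ forces $\psi(a,b)=\psi(a,0)\,\psi(0,b)=\alpha(a)\beta(b)$. These two assignments are mutually inverse, so $\mathrm{Hom}(A\times B,\C^{*})\cong\mathrm{Hom}(A,\C^{*})\times\mathrm{Hom}(B,\C^{*})$, and combining with the inductive hypothesis for $A$ and the base case for $B$ gives $\chi_g(h)=\prod_{i=1}^m\chi_{g_i}^{(a_i)}(h_i)$, with $g,h$ expressed in the cyclic coordinates.

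Finally, I would note that these characters are pairwise distinct and exhaust all of them: distinctness because the tuple $(g_1,\dots,g_m)$ is recovered from $\chi_g$ via the restrictions above, and exhaustiveness because there are $\prod_i a_i=|G|$ such products while a finite abelian group has exactly $|G|$ irreducible characters \cite{fulton1991representation}. The representation statement then drops out of the first paragraph: $\rho_{g_1}\cdots\rho_{g_m}$ is a one‑dimensional representation whose character is $\chi_{g_1}^{(a_1)}\cdots\chi_{g_m}^{(a_m)}=\chi_g$, and a one‑dimensional representation is determined by its character, so this product equals $\rho_g$. I do not expect a real obstacle here; the only points needing care are the clean identification of irreducible characters with elements of $\mathrm{Hom}(G,\C^{*})$ in the abelian case, and verifying that the restriction and recombination maps are genuine inverses, so that the parametrization of $\mathrm{Hom}(A\times B,\C^{*})$ is a bijection and not merely a surjection.
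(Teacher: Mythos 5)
The paper does not prove this proposition at all --- it is stated as a standard fact imported from the representation-theory literature (Fulton--Harris), so there is no in-paper argument to compare against. Your proof is correct and complete, and it is essentially the standard textbook derivation: Schur's lemma plus algebraic closedness of $\C$ to reduce irreducible characters of an abelian group to elements of $\mathrm{Hom}(G,\C^{*})$, the splitting of such homomorphisms along the cyclic factors, and a counting argument to confirm exhaustiveness.
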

\begin{remark}
In this special Abelian case the following basic properties are satisfied by the character $\chi_g$ as introduced in Proposition~\ref{prop:char_ab}: For any $g,h\in G:\,\chi_g(h)=\chi_h(g)$ and if we take another $g'\in G$ the characters act as group homomorphisms $\chi_h(g+g')=\chi_h(g)\chi_h(g')$. However, this is only true for characters of one-dimensional irreducible representations and does not hold in general.
\end{remark}
\begin{theorem}[First orthogonality relation of characters (abelian version) \cite{fulton1991representation}]\label{thm:first_orthogonality_relation}
Let $g,g'\in G$ be elements of the Abelian Group $G$. In the space of functions $G\rightarrow\C$ the two characters $\chi_g,\chi_{g'}$ of irreducible representations of $G$ are orthonormal
\begin{equation}
    \langle \chi_g,\chi_{g'}\rangle=\frac{1}{|G|}\sum_{h\in G}\chi_g(h)\chi_{g'}(h)=\delta_{gg'},
\end{equation}
The chararacters are defined as in Proposition~\ref{prop:char_ab}.
\end{theorem}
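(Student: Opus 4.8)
The plan is to reduce the identity to a one-line computation on cyclic groups. First I would read $\langle\chi_g,\chi_{g'}\rangle$ as the Hermitian inner product $\frac1{|G|}\sum_{h\in G}\chi_g(h)\overline{\chi_{g'}(h)}$ (the complex conjugate is what makes the stated normalization come out as $\delta_{g,g'}$; equivalently, since the characters are $S^1$-valued, $\overline{\chi_{g'}(h)}=\chi_{-g'}(h)$). Then I invoke the classification theorem (Theorem~\ref{thm:class_theorem_ab_groups}) to fix an isomorphism $G\cong\Z/a_1\Z\times\cdots\times\Z/a_m\Z$ and the product formula of Proposition~\ref{prop:char_ab}, $\chi_g(h)=\prod_{i=1}^m\chi_{g_i}^{(a_i)}(h_i)$. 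Because summing over $h\in G$ amounts to summing independently over each component $h_i\in\Z/a_i\Z$ and $|G|=\prod_i a_i$, the inner product factorizes:
\begin{equation}
    \langle\chi_g,\chi_{g'}\rangle=\prod_{i=1}^m\frac1{a_i}\sum_{h_i\in\Z/a_i\Z}\chi_{g_i}^{(a_i)}(h_i)\,\overline{\chi_{g'_i}^{(a_i)}(h_i)}.
\end{equation}
So it suffices to prove the statement for a single cyclic group $\Z/a\Z$.

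For the cyclic case, Proposition~\ref{prop:cyclic_char} gives $\chi_k(\ell)=\omega_a^{k\ell}$ with $\omega_a=e^{2\pi i/a}$, so $\chi_k(\ell)\overline{\chi_{k'}(\ell)}=\omega_a^{(k-k')\ell}$ and the factor above is the geometric sum $\frac1a\sum_{\ell=0}^{a-1}\omega_a^{(k-k')\ell}$. If $k\equiv k'\pmod a$ every term is $1$ and the sum is $1$; otherwise $z:=\omega_a^{k-k'}\ne1$, and $\sum_{\ell=0}^{a-1}z^\ell=(z^a-1)/(z-1)=0$ since $z^a=1$. Hence the cyclic factor equals $\delta_{k,k'}$, with $k,k'$ the canonical representatives in $\{0,\dots,a-1\}$. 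Reassembling, $\langle\chi_g,\chi_{g'}\rangle=\prod_{i=1}^m\delta_{g_i,g'_i}=\delta_{g,g'}$, because two elements of $G$ agree precisely when all their cyclic components agree.

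I do not expect a genuine obstacle here: this is the textbook orthogonality computation, and the only points needing care are bookkeeping — keeping the conjugation convention consistent, applying Proposition~\ref{prop:char_ab} to matching decompositions of $g$, $g'$ and $h$, and noting that the vanishing of the geometric sum uses exactly that $\omega_a$ is a primitive $a$-th root of unity. If one preferred a more representation-theoretic route, this identity is the abelian specialization of Schur's orthogonality relations, but since all the irreducible representations here are one-dimensional (Proposition~\ref{prop:cyclic_char}), the direct geometric-sum evaluation is the shortest path given the results already in hand.
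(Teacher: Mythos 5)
Your proof is correct. The paper does not actually prove this statement --- it is imported as a background result with a citation to Fulton--Harris --- so there is no in-paper argument to compare against; your reduction to cyclic factors via Theorem~\ref{thm:class_theorem_ab_groups} and Proposition~\ref{prop:char_ab}, followed by the geometric-sum evaluation $\frac1a\sum_{\ell=0}^{a-1}\omega_a^{(k-k')\ell}=\delta_{k,k'}$, is the standard textbook derivation and is complete. One genuinely useful observation you make: the displayed formula in the statement, taken literally without a complex conjugate, equals $\frac1{|G|}\sum_h\chi_{g+g'}(h)=\delta_{g,-g'}$ rather than $\delta_{g,g'}$, since the abelian characters satisfy $\chi_g\chi_{g'}=\chi_{g+g'}$. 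Your Hermitian reading $\frac1{|G|}\sum_h\chi_g(h)\overline{\chi_{g'}(h)}$ (equivalently, replacing $\chi_{g'}(h)$ by $\chi_{g'}(-h)$) is the intended one and is what the paper actually uses downstream, e.g.\ in the cancellation of cross terms after the Fourier transform in Appendix~\ref{appendix:standard_hsp_algo}; the statement as printed has a missing conjugation.
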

A reformulation of the quantum Fourier transform for states representing elements of $G$ can be given in terms of characters. Consider again the cyclic group $\Z/a\Z$, pick some $k\in\Z/a\Z$ and define
\begin{equation}
    \ket{\chi_k^{(a)}}=\frac{1}{\sqrt{a}}\sum_{\ell=0}^{a-1}\omega_a^{k\ell}\ket{\ell}=\frac{1}{\sqrt{a}}\sum_{\ell=0}^{a-1}\chi_k^{(a)}(\ell)\ket{\ell}.
\end{equation}
Indeed this is the quantum Fourier transform, where we used the abbreviation $\omega_a=e^{2\pi i/a}$ for the $a$-th root of unity. A generalization is given by
\begin{definition}[Quantum Fourier transform of a group register \cite{dewolf2019quantum}]\label{def:general_char_state}
Let $G$ be a finite Abelian group with decomposition as in Theorem~\ref{thm:class_theorem_ab_groups}. For any $g\in G$ the character state $\ket{\chi_g}$ is declared by
\begin{equation}\label{eq:def_char_state}
    \ket{\chi_g}=\frac{1}{\sqrt{G}}\sum_{h\in G}\chi_g(h)\ket{h},
\end{equation}
the functions $\chi_g$, $g\in G$ as in Proposition~\ref{prop:char_ab}. 
\end{definition}
In the algorithm solving the HSP from Figure~\ref{fig:standard_hsp_algo}, states will be transformed according to the rule in Eq.~\eqref{eq:def_char_state} and certain summands will cancel out according to Theorem~\ref{thm:first_orthogonality_relation}. The following subgroup will be of particular interest to us
\begin{equation}\label{eq:def_h_perp}
    H^\perp=\{g\in G : \forall h\in H : \chi_g(h)=1\}.
\end{equation}
Elements of $H^\perp$ define functions, their characters, which allow us for probing the subgroup $H$.

\subsection{The standard algorithm solving the (Abelian) HSP}
\label{appendix:standard_hsp_algo}
Before going into the procedure of how the online erasure in the algorithm for the HSP works we explain in this section the quantum algorithm which solves the HSP (adapted from \cite{Mosca:1999jt,dewolf2019quantum}, originally solved by \cite{kitaev1995}). We will only work out the case where $|G|$ and $|S|$ are powers of $2$ in order to avoid approximations which are needed in the more general case. The $G$ register shall be made up of the first $n=\log_2|G|$ qubits, the $S$ register consists of the next $m=\log_2|S|$ qubits. For (later) notational convenience, we refer to the ground state of the $G$ register as $\ket{0}_G$ and as that of the $S$ register as $\ket{0}_S$. If clear from the context, subscripts indicating the register are dropped. We explicitly calculate the protocol from the circuit in Figure~\ref{fig:standard_hsp_algo} from the main part of the paper.
\paragraph*{Step 1.} Denote by $\rho_\ell$ the density matrix of the joint $G$ and $S$ register after iteration step $\ell$ in the HSP algorithm, in particular $\rho_0=\ket{0}\bra{0}\otimes\ket{e}\bra{e}$. The first steps of the algorithm
\begin{align}
    \rho_0&\xmapsto[]{(1)}\ket{\chi_0}\bra{\chi_0}\otimes\ket{0}\bra{0}\\
    &=\frac{1}{{|G|}}\sum_{g,g'\in G}\ket{g}\bra{g'}\otimes\ket{0}\bra{0}=\rho_1,
\end{align}
where we used $\chi_0(g)=1$ for all $g\in G$.
\paragraph*{Step 2.} Oracle Operation:
\begin{align}
    \rho_1&\xmapsto[]{(2)}\frac{1}{{|G|}}\sum_{g,g'\in G}\ket{g}\bra{g'}\otimes\ket{f(g)}\bra{f(g')}\\
    &=\frac{1}{{|G|}}\sum_{\bar{g},\bar{g}'\in G/H}\left\{\sum_{h,h'\in H}\ket{g+h}\bra{g'+h'}\right\}\\
    &\quad\otimes\ket{f(g)}\bra{f(g')}=\rho_2.\label{eq:factoring_step2}
\end{align}
In the Eq.~\eqref{eq:factoring_step2} we used that the choice of representative of $g$ in $\bar{g}=g+H\in G/H$ doesn't affect the inner summation and that $f(g)=f(g')$ if and only if $\bar{g}=\bar{g'}$. At this stage the $S$ register is traced out
\begin{equation}\label{eq:std_hsp_redstate}
    \rho_2'=\tr_S{\rho_2}=\frac{1}{{|G|}}\sum_{\bar{g}\in G/H}\left\{\sum_{h,h'\in H}\ket{g+h}\bra{g+h'}\right\}.
\end{equation}
\paragraph*{Step 3 - 4.} The remaining two steps are performed on the reduced $G$ register. The states will be denoted by a dash $\rho_\ell'$.
\begin{align}\label{eq:std_hsp_finalstate}
    \rho_2'&\xmapsto[]{(3)}\frac{1}{{|G|}}\sum_{\bar{g}\in G/H}\left\{\sum_{h,h'\in H}\ket{\chi_{g+h}}\bra{\chi_{g+h'}}\right\}\\
    &=\left(\frac{|H|}{|G|}\right)^2\sum_{\bar{g}\in G/H}\left\{\sum_{\tilde{g},\tilde{g}'\in H^\perp}\chi_g(\tilde{g})\chi_g(\tilde{g}')\ket{\tilde{g}}\bra{\tilde{g}'}\right\}=\rho_3'.
\end{align}
The measurement result of the $G$ register gives an element $\tilde{g}\in H^\perp$. This element defines the function $\chi_{\tilde{g}}:G\rightarrow S^1$ whose restriction to $H$ is the unit function. For all $h\in H$, $\chi_{\tilde{g}}(h)=1$. Multiple iterations of the HSP algorithm give a set of such functions $\{\chi_{\tilde{g}}\}_{\tilde{g}}$ constraining $H\subseteq G$ and thus solving the problem. 

\subsection{Proofs}
\label{appendix:proofs_hsp_erasure}
\upperboundthm*
\begin{proof}
    The measure we use to quantify the degree of entanglement between the $G$ and $S$ register is the conditional von Neumann entropy
    \begin{equation}
        H(S|G)=H(\rho_{GS})-H(\rho_G),
    \end{equation}
    where $H$ is the standard von Neumann entropy \cite{NJC96-QIT,Cerf1997,NF17}. The number of Bell pairs formed by qubits from $G$ and $S$ will be upper bounded by $-H(S|G)$ as a single bell pair contributes a negative conditional entropy of $-1$. The joint state of $\mathcal{H}_G\otimes\mathcal{H}_S$ is $\rho_{GS}$ and $\rho_G=\tr_S\rho_{GS}$ is the reduced state of the $G$ register. Observe the following two facts: Firstly, up until the erasure of the $S$ register which takes place after step \ref{enum:hsp_algo_2} in the HSP algorithm, the joint state $\rho_{GS}$ is a pure state. Secondly, we have $\rho_{GS}=\rho_G\otimes\rho_S$ before step \ref{enum:hsp_algo_2} where the function oracle is applied, where $\rho_G$ and $\rho_S=\ket{0}\bra{0}$ are pure states. Assuming that the function oracle $O_f$ is a black box, the only stage of the HSP algorithm where $H(S|G)$ is non-zero is after $O_f$ but before the $\mathcal{H}_S$ is traced out. The corresponding state from Eq.~\eqref{eq:factoring_step2} is
    \begin{align}
        \rho_{GS}=\frac{1}{{|G|}}\sum_{\substack{\bar{g},\bar{g}'\in\\ G/H}}\left\{\sum_{\substack{h,h'\in \\ H}}\ket{g+h}\bra{g'+h'}\right\}\otimes\ket{f(g)}\bra{f(g')},
    \end{align}
    whose conditional entropy $H(S|G)$ is given by
    \begin{align}
        H(S|G) &= H(\rho_{GS})-H(\rho_G)\\&=0+\tr(\rho_G\log_2\rho_G),
    \end{align}
    with $\rho_G=\rho_2'$ the reduced $G$ register state, c.f. Eq.~\eqref{eq:std_hsp_redstate}. The entropy measure is invariant under unitary transformations of $\rho_G$. By reordering the computational basis of the $G$ register we can split $\mathcal{H}_G\cong\mathcal{H}_H\otimes\mathcal{H}_{G/H}$. The state $\rho_2$ can be factored $\rho_2\cong\rho_H\otimes\rho_{G/H}$
    \begin{align}
        &\frac{1}{{|G|}}\sum_{\bar{g}\in G/H}\left\{\sum_{h,h'\in H}\ket{g+h}\bra{g+h'}\right\}\\ &\quad \cong\underbrace{\frac{1}{|H|}\sum_{h,h'\in H}\ket{h}\bra{h'}}_{=\rho_H}\otimes\underbrace{\frac{|H|}{|G|}\sum_{\bar{g}\in G/H}\ket{g}\bra{g}}_{=\rho_{G/H}}.
    \end{align}
    The factor $\rho_H$ is a pure state of $\mathcal{H}_H$, the right one $\rho_{G/H}$ is a mixed state in $\mathcal{H}_{G/H}$, already represented in it's diagonal basis with eigenvalues $|H|/|G|$. The entropy of $\rho_G=\rho_2$ therefore is
    \begin{align}
        H(\rho_G)&=-\tr_H\left(\rho_H\log_2\rho_H\right)-\tr_{G/H}\left(\rho_{G/H}\log_2\rho_{G/H}\right)\\
        &=-\sum_{G/H}\frac{|H|}{|G|}\log_2\left(\frac{|H|}{|G|}\right)\\
        &=+\log_2\left(\frac{|G|}{|H|}\right),
    \end{align}
    which gives the entropy of the $S$ register conditioned on $G$
    \begin{equation}\label{eq:conditional_entropy_max}
        H(S|G)=-\log_2\left(\frac{|G|}{|H|}\right).
    \end{equation}
    Up to the negative sign this is equal the maximal number $k:=\log_2(|G|/|H|)$ of Bell pairs formed by qubits from the $G$ and $S$ register which can possibly be extracted from the HSP algorithm.
\end{proof}

\upperboundexistence*
\begin{proof}
    We argue why operations $U_G$ and $U_S$ exist such that the result from Eq.~\eqref{eq:strategy_compression} can be achieved. First, consider the $G$ register: If a choice of representative is made for each coset $\bar{g}\in G/H$, any element $g'\in G$ can be split as $g'=g+h$ with $g$ the representative of $\bar{g}'$ and $h\in H$. Thus, there exists an invertible map on $\mathcal{H}_G$ such that $\ket{g'}\mapsto\ket{h}\otimes\ket{g}\in \mathcal{H}_H\otimes\mathcal{H}_{G/H}$. The states $\{\ket{g}\}_{g\in G}$ are orthonormal, thus, this map is unitary, we shall denote it by $U_G$. In comparison to the notation in step (3) from the paragraph about the classification of all qubit extraction procedures, we have $\mathcal{H}_{G/H}=\mathcal{H}_{B_1}$, $\mathcal{H}_H=\mathcal{H}_G^{(1)}$ and
    \begin{align}
        U_G:\mathcal{H}_G&\longrightarrow{}\mathcal{H}_H\otimes\mathcal{H}_{G/H}\\
        \ket{g'}=\ket{g+h}&\longmapsto\ket{h}\otimes\ket{g}.
    \end{align}
    The relevant states in the ancillary register $H_S$ are of the form $\ket{f(g)}$ for $g\in G$. In fact, by the very defining assumption for the HSP in Eq.~\eqref{eq:defining_eq_hsp}, it suffices to restrict to representatives $g$ of cosets $\bar{g}\in G/H$. Generally, the ancillary register $\mathcal{H}_S$ may have more qubits than are actually needed to represent $\im f\subseteq S$. This overhead of qubits can be factored out by reordering the computational basis of $\mathcal{H}_S$ such that $\ket{f(g)}\mapsto \ket{0}\otimes\ket{\tilde{f}(g)}\in\mathcal{H}_S^{(1)}\otimes\mathcal{H}_{B_2}$. This operation is unitary and can be chosen such that for all representatives $g$, the states $\ket{\tilde{f}(g)}\in\mathcal{H}_{B_2}$ and $\ket{g}\in\mathcal{H}_{G/H}$ have the same computational representation. This transformation will be denoted by $U_S$.
\end{proof}

\nogoupperboundthm*
\begin{proof}
    Any local unitary $U_G$ factoring the main register register $\hilbert_G$ into $\hilbert_H\otimes\hilbert_{G/H}$ can in fact be used to determine $H\subseteq G$. Elements in $H$ can be obtained by applying the inverse $U_G^\dagger$ to states in $\hilbert_H\otimes\hilbert_{G/H}$. Pick some $g\in G$, then $U_G$ factors the state $\ket{g}$ into two parts
    \begin{equation}
        U_G\ket{g}=\ket{h_g}\otimes\ket{[g]}.
    \end{equation}
    Despite our ignorance about how the group structure is binarily encoded in the quantum registers $\hilbert_H$ and $\hilbert_{G/H}$, we know that for any $h\in H$, $\ket{[g]}=\ket{[g+h]}$. Elements from $H$ can then be obtained in two steps
    \begin{enumerate}
        \item Determine $\ket{[0]}\in\hilbert_{G/H}$ by computing $U_G\ket{0}=\ket{h_0}\otimes\ket{[0]}$.
        \item Pick any $\ket{h}_H\in\hilbert_H$ and deduce $\ket{h}_G\in\hilbert_G$ via
        \begin{equation}
            U_G^\dagger\ket{h}\otimes\ket{[0]}=\ket{h}\in\hilbert_G.
        \end{equation}
    \end{enumerate}
    In the second step the register $H$ and $G$ is highlighted for the states $\ket{h}_H$ and $\ket{h}_G$.
    That is because for $\hilbert_G$, we have access to an encoding $g\in G\mapsto \ket{g}\in\hilbert_G$ while for $\hilbert_H$ we do not.
    That is also the reason why one has to use the inverse operation $U_G^\dagger$ to obtain $H$.
    As with the functions $\chi_{\tilde{g}}$ from the standard algorithm solving the HSP in Appendix \ref{appendix:standard_hsp_algo}, this procedure can be used to determine a small number of elements $h\in H$ which then generate the whole subgroup $H$.
\end{proof}
In fact one can even go further: Finding an on-the-go erasure procedure in the setting of Theorem~\ref{thm:nogoupperboundthm} is more difficult than solving the HSP, for that it also requires the transformation $U_S$. For partial information erasure procedures we give a quantitative description of \textit{how much} information is required to compress the entanglement for an on-the-go erasure.

\begin{definition}[General local transformations of $G$ and $S$]
    \label{def:general_local_trsf}
    Define local transformations 
    \begin{align}
        U_G : \hilbert_G &\rightarrow \hilbert_{G}^{(1)}\otimes \hilbert_{G}^{(2)},\\
        U_S : \hilbert_{S} &\rightarrow \hilbert_{S}^{(1)}\otimes \hilbert_{S}^{(2)},
    \end{align}
    which factor quantum states encoding elements in $g\in G$ and $s\in S$ according to
    \begin{align}
        U_G\ket{g}=\ket{g^{(1)},g^{(2)}},\\ U_S\ket{s}=\ket{s^{(1)},s^{(2)}}.\label{eq:splitting_aux_state}
    \end{align}
\end{definition}

Similarly to the notation introduced in Eq.\ \eqref{eq:splitting_aux_state}, let us write for some state $\ket{f(g)}\in\hilbert_S$, $U_S\ket{f(g)}=\ket{f^{(1)}(g),f^{(2)}(g)}$. Using this notation we can formulate two general conditions on transformations $U_G$ and $U_S$:

\begin{restatable}[General characterization of partial erasure transformations]{theorem}{generaltrsfhsp}
    \label{thm:generaltrsfhsp}
    If and only if the transformations $U_G$ and $U_S$ satisfy the two requirements
    \begin{enumerate}
        \item\label{enum:reduced_trsf_property1} For all $g,\tilde{g}\in G:$ $f(g)=f(\tilde{g})\rightarrow g^{(2)}=\tilde{g}^{(2)}$,
        \item\label{enum:reduced_trsf_property2} The function $f^{(1)}(g)$ only depends on $g^{(1)}$ and $f^{(2)}(g)=g^{(2)}$ in the binary computational representation in $\mathcal{H}_S^{(2)}=(\C^2)^{\otimes k}=\mathcal{H}_G^{(2)}$,
    \end{enumerate}
    they can factor out the entanglement in the form of $\ell$ Bell pairs after $O_f$ in the HSP algorithm, where $\ell=\dim\hilbert_{G}^{(2)}=\dim\hilbert_{S}^{(2)}$.
\end{restatable}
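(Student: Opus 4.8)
The plan is to prove the biconditional directly. Write $\ket{\psi}=\tfrac1{\sqrt{|G|}}\sum_{g\in G}\ket{g,f(g)}$ for the pure state of $GS$ right after the oracle (so $\rho_{GS}=\pure{\psi}$, cf.\ Eq.~\eqref{eq:factoring_step2}). Applying the basis-permuting unitaries of Definition~\ref{def:general_local_trsf} and regrouping the tensor factors as $(\hilbert_G^{(1)}\otimes\hilbert_S^{(1)})\otimes(\hilbert_G^{(2)}\otimes\hilbert_S^{(2)})$ gives
\begin{equation}
(U_G\otimes U_S)\ket{\psi}=\frac1{\sqrt{|G|}}\sum_{g\in G}\ket{g^{(1)},f^{(1)}(g)}_{G^{(1)}S^{(1)}}\otimes\ket{g^{(2)},f^{(2)}(g)}_{G^{(2)}S^{(2)}}.
\end{equation}
Since $U_G$ permutes the computational basis, $g\mapsto(g^{(1)},g^{(2)})$ is a bijection of $G$ onto the product of the basis-label sets of $\hilbert_G^{(1)}$ and $\hilbert_G^{(2)}$; in particular a pair of labels $(a,b)$ determines a unique $g=g(a,b)$. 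Here ``factoring out $\ell$ Bell pairs'' means $(U_G\otimes U_S)\rho_{GS}(U_G\otimes U_S)^\dagger=\rho_{G^{(1)}S^{(1)}}\otimes(\pure{\chi})^{\otimes\ell}_{G^{(2)}S^{(2)}}$, i.e.\ the factorization of Eq.~\eqref{eq:strategy_compression} with $\ell=\dim\hilbert_G^{(2)}$ qubits in each of $G^{(2)},S^{(2)}$.

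For \emph{sufficiency} I would first note that requirement~\ref{enum:reduced_trsf_property2} implies requirement~\ref{enum:reduced_trsf_property1}: if $f(g)=f(\tilde g)$ then $U_S\ket{f(g)}=U_S\ket{f(\tilde g)}$, so $f^{(2)}(g)=f^{(2)}(\tilde g)$ and hence $g^{(2)}=f^{(2)}(g)=f^{(2)}(\tilde g)=\tilde g^{(2)}$. Then, writing $f^{(1)}(g)=\tilde f(g^{(1)})$ (well defined by requirement~\ref{enum:reduced_trsf_property2}) and $f^{(2)}(g)=g^{(2)}$, re-index the display above by $(a,b)=(g^{(1)},g^{(2)})$; using $|G|=|G^{(1)}|\,2^{\ell}$ it factors as $\big(\tfrac1{\sqrt{|G^{(1)}|}}\sum_a\ket{a,\tilde f(a)}\big)_{G^{(1)}S^{(1)}}\otimes\big(\tfrac1{\sqrt{2^{\ell}}}\sum_b\ket{b,b}\big)_{G^{(2)}S^{(2)}}$, whose second factor is $(\ket{\chi})^{\otimes\ell}$ under the identification $\hilbert_S^{(2)}=(\C^2)^{\otimes\ell}=\hilbert_G^{(2)}$ --- exactly Eq.~\eqref{eq:strategy_compression}. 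For \emph{necessity}, purity of $\rho_{GS}$ together with purity of $(\pure{\chi})^{\otimes\ell}$ forces $\rho_{G^{(1)}S^{(1)}}$ to be pure, so $(U_G\otimes U_S)\ket{\psi}=\ket{\psi_1}_{G^{(1)}S^{(1)}}\otimes(\ket{\chi})^{\otimes\ell}_{G^{(2)}S^{(2)}}$. Projecting both sides onto a computational basis label $\bra{a}_{G^{(1)}}$ and using that $(a,b)$ isolates a single summand on the left, the coefficients of $\ket{b}_{G^{(2)}}$ on the two sides must match: $\ket{f^{(1)}(g(a,b))}_{S^{(1)}}\ket{f^{(2)}(g(a,b))}_{S^{(2)}}\propto\ket{\phi_a}_{S^{(1)}}\ket{b}_{S^{(2)}}$, with $\ket{\phi_a}$ the projected vector in $\hilbert_S^{(1)}$. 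Since the left-hand side is a product of basis states and $\ket{b}_{G^{(2)}}$ occurs with a $b$-independent nonzero amplitude (maximal entanglement), this forces $f^{(2)}(g)=g^{(2)}$ and makes $\ket{f^{(1)}(g(a,b))}$ independent of $b$, i.e.\ $f^{(1)}$ depends on $g$ only through $g^{(1)}$. That is requirement~\ref{enum:reduced_trsf_property2}, and requirement~\ref{enum:reduced_trsf_property1} follows as above.

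I expect the \emph{main obstacle} to be the necessity direction: turning ``the post-oracle state is a product state whose $G^{(2)}S^{(2)}$ factor is $\ell$ Bell pairs'' into the two pointwise algebraic statements about $f^{(1)}$ and $f^{(2)}$. The two ingredients that make it go through cleanly are (i) bijectivity of $g\mapsto(g^{(1)},g^{(2)})$, so that the pair $(a,b)$ picks out a single summand, and (ii) \emph{maximality} of the entanglement, which is exactly what pins down $f^{(2)}$ and the alignment of the computational bases on $\hilbert_S^{(2)}$ and $\hilbert_G^{(2)}$; a parallel argument projecting instead onto $\bra{b}_{G^{(2)}}$ would cover the partially-mixed-qubit relaxation noted in the footnote to Eq.~\eqref{eq:strategy_compression}.
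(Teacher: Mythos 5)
Your proof is correct, and the sufficiency half is essentially the paper's own argument: the paper computes $\rho_{2a}=(U_G\otimes U_S)\rho_2(U_G^\dagger\otimes U_S^\dagger)$, observes that under requirements 1--2 the double sum over $g,\tilde g$ splits into a $G^{(1)}S^{(1)}$ factor and a uniform diagonal sum on $G^{(2)}S^{(2)}$ equal to $(\pure{\chi})^{\otimes\ell}$, and your re-indexing by $(a,b)=(g^{(1)},g^{(2)})$ is the same computation carried out on the pure state $\ket{\psi}$ rather than on the density matrix. Where you genuinely go beyond the paper is the necessity direction: the paper's proof ends by simply asserting that the factorization equality (marked $\heartsuit$ there) is ``the necessary and sufficient condition,'' without deriving requirements 1--2 from the assumption that $\ell$ Bell pairs can be factored. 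Your argument --- purity of the global state forces $\rho_{G^{(1)}S^{(1)}}$ to be pure, then projecting onto $\bra{a}_{G^{(1)}}\bra{b}_{G^{(2)}}$ isolates the single summand $\ket{f^{(1)}(g(a,b))}\otimes\ket{f^{(2)}(g(a,b))}$ and matches it against $\ket{\phi_a}\otimes\ket{b}$, pinning down $f^{(2)}(g)=g^{(2)}$ and the $b$-independence of $f^{(1)}$ --- supplies exactly the missing half, and it correctly exploits the two features (bijectivity of $g\mapsto(g^{(1)},g^{(2)})$ and maximality of the entanglement) that make the converse work. Your side observation that requirement 2 implies requirement 1 is also correct and is not made explicit in the paper. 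The only caveat, which affects the theorem statement rather than your proof, is that ``$\ell$ Bell pairs'' must be read as literal computational-basis states $(\ket{00}+\ket{11})/\sqrt{2}$ aligned between $\hilbert_G^{(2)}$ and $\hilbert_S^{(2)}$ (rather than arbitrary maximally entangled states, which would only force $f^{(2)}$ to equal $g^{(2)}$ up to a relabeling); you adopt the same reading as the paper, so this is consistent.
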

\begin{proof}
    We obtain conditions on transformations $U_G$ and $U_S$ which allow bringing the joint state of the $G$ and $S$ register into the form 
    \begin{equation}
        \rho_{GS}\mapsto \rho_{\tilde{G}\tilde{S}}\otimes (\ket{\chi}\bra{\chi})^{\otimes \ell}.
    \end{equation}
    We allow that $U_G$ may only factor part of the register $\mathcal{H}_{G/H}$, say $k\leq|G|/|H|$ qubits. The transformations need not necessarily respect the group structure of $G$, hence, we refrain from using an intermediate subgroup $K$ as in the main part of the paper but rather work with the factorization from Definition~\ref{def:general_local_trsf}. Starting with the state $\rho_2$ after step \ref{enum:hsp_algo_2} of the standard HSP algorithm (see Figure~\ref{fig:standard_hsp_algo} and Appendix \ref{appendix:standard_hsp_algo}), we find a new state $\rho_{2a}$
\begin{widetext}
    \begin{align}
        \rho_{2a}&=\left(\vphantom{U_G^\dagger}U_G\otimes U_S\right)\rho_2\left(U_G^\dagger\otimes U_S^\dagger\right)\\
        &=\frac{1}{|G|}\sum_{g,\tilde{g}\in G}\ket{g^{(1)},g^{(2)}}\bra{\tilde{g}^{(1)},\tilde{g}^{(2)}}\otimes \ket{f^{(1)}(g),f^{(2)}(g)}\bra{f^{(1)}(\tilde{g}),f^{(2)}(\tilde{g})}\\
        &\stackrel{\heartsuit}{=}\frac{1}{|G|}\left\{\sum_{g^{(1)},\tilde{g}^{(1)}}\ket{g^{(1)}}\bra{\tilde{g}^{(1)}}\otimes \ket{f^{(1)}(g^{(1)})}\bra{f^{(1)}(\tilde{g}^{(1)})}\right\}\otimes\\
        &\quad\qquad\left\{\sum_{g^{(2)},\tilde{g}^{(2)}}\ket{g^{(2)}}\bra{\tilde{g}^{(2)}}\otimes\ket{f^{(2)}(g^{(2)})}\bra{f^{(2)}(\tilde{g}^{(2)})}\right\}\\
        &=\frac{1}{\dim\mathcal{H}_G^{(1)}}\left\{\sum_{g^{(1)},\tilde{g}^{(1)}}\ket{g^{(1)}}\bra{\tilde{g}^{(1)}}\otimes \ket{f^{(1)}(g^{(1)})}\bra{f^{(1)}(\tilde{g}^{(1)})}\right\}\otimes\left(\ket{\chi}\bra{\chi}\right)^{\otimes k}.\label{eq:general_hsp_trsf_state}
    \end{align}
\end{widetext}
The necessary and sufficient condition for the transformations $U_G$ and $U_S$ factoring $\ell$ Bell pairs is the equality $\heartsuit$ above.
\end{proof}

\begin{remark}
The transformations $U_G$ and $U_S$ from Promise \ref{promise:partialsolutionpromise} where one has partial information on an intermediate subgroup $H\subseteq K\subseteq G$ are a special case of the transformations from Theorem~\ref{thm:generaltrsfhsp} with
\begin{equation}
    \hilbert_G^{(1)}=\hilbert_{K},\text{ and } \hilbert_G^{(2)}=\hilbert_{G/K},
\end{equation}
as this decomposition satisfies all two assumptions from Theorem~\ref{thm:generaltrsfhsp}.
\end{remark}

\begin{restatable}[Work cost of erasure with partial information, general version]{theorem}{generalpartialsolutionworkcost}
    \label{thm:generalpartialsolutionworkcost}
    Given the transformations $U_G$ and $U_S$ from Definition~\ref{def:general_local_trsf} and Theorem~\ref{thm:generaltrsfhsp}, there exists an on-the-go erasure protocol acting on $G$, $S$ and an environment at temperature $T$, resetting the auxiliary register $S$ after $O_f$ while preserving $G$ which does not exceed an average work cost of erasure of
    \begin{equation}
        W = (m-2\ell)k_B T\ln 2,
    \end{equation}
    where $\ell = \log_2(\dim\hilbert_G^{(2)})$.
\end{restatable}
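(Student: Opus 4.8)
The plan is to run precisely the three-step protocol of Figure~\ref{fig:otg_hsp} --- transform, erase, untransform --- but with the general factorization of Theorem~\ref{thm:generaltrsfhsp} substituted for the subgroup-adapted one of Theorem~\ref{thm:partialsolutionworkcost}, and to add up the work cost step by step. First I would apply the free local unitary $U_G\otimes U_S\otimes\mathds{1}_B$ to the post-oracle state $\rho_{GS}=\rho_2$. By the ``if'' direction of Theorem~\ref{thm:generaltrsfhsp} (exactly the equality $\heartsuit$ appearing in its proof) the result factorizes as
\begin{equation*}
  \rho_{2a}=\rho_{G^{(1)}S^{(1)}}\otimes\left(\ket{\chi}\bra{\chi}\right)^{\otimes\ell},
\end{equation*}
where the $\ell$ Bell pairs sit on $\hilbert_G^{(2)}\otimes\hilbert_S^{(2)}$ with $\ell=\log_2(\dim\hilbert_G^{(2)})=\log_2(\dim\hilbert_S^{(2)})$, and $\hilbert_S^{(1)}$ carries the remaining $m-\ell$ qubits of the auxiliary register.

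Next I would erase the auxiliary register in two independent blocks. On $S^{(2)}$ ($\ell$ qubits) I use $G^{(2)}$ as quantum side information: invoking Lemma~\ref{thm:theoremlidia} with ``$S$''$=S^{(2)}$, ``$G$''$=G^{(2)}$ and reference $R$ comprising everything else ($G^{(1)}$, $S^{(1)}$, and the rest of the computer), the cost is $H(S^{(2)}|G^{(2)})_\rho\,k_BT\ln 2=-\ell\,k_BT\ln 2$, since $\ell$ Bell pairs have $H(G^{(2)}S^{(2)})=0$ and $H(G^{(2)})=\ell$. Concretely this is the Bell-pair subroutine already spelled out in the excerpt: rotate each $\ket{\chi}$ to $\ket{00}$ for free, then reverse-erase the $G^{(2)}$ qubit back to $\mathds{1}/2$, gaining $k_BT\ln 2$ per pair. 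On $S^{(1)}$ ($m-\ell$ qubits) I apply the fixed, state-independent Landauer map $\tilde{\mathcal{E}}$; since no promise is made on $\rho_{S^{(1)}}$, its worst case is the fully mixed state, so the cost is at most $(m-\ell)\,k_BT\ln 2$, and the map sends $\rho_{G^{(1)}S^{(1)}}\mapsto\rho_{G^{(1)}}\otimes\ket0\bra0_{S^{(1)}}$. The two blocks sum to $(m-\ell)k_BT\ln 2-\ell k_BT\ln 2=(m-2\ell)k_BT\ln 2$.

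Finally I would reverse the local unitary with $U_G^\dagger\otimes U_S^\dagger\otimes\mathds{1}_B$ (again free), possibly preceded by a free relabeling of $S$ so that it lands in $\ket0_S$ in the original basis, and verify that the main register's reduced state is untouched. After the two-block erasure the state on $G^{(1)}G^{(2)}S$ is $\rho_{G^{(1)}}\otimes(\mathds{1}_{G^{(2)}}/2)^{\otimes\ell}\otimes\ket0\bra0_S$, and $\rho_{G^{(1)}}\otimes(\mathds{1}_{G^{(2)}}/2)^{\otimes\ell}=\tr_{S}\rho_{2a}=U_G\,(\tr_S\rho_{GS})\,U_G^\dagger$; hence applying $U_G^\dagger$ returns exactly $\tr_S(\rho_{GS})\otimes\ket0\bra0_S$, matching Figure~\ref{fig:InvariantGRegister}, so the algorithm's output is unaffected. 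Since the transformations are free, the total work equals that of the two erasure blocks, $W=(m-2\ell)k_BT\ln 2$, and the ``does not exceed'' reflects the worst-case (fully mixed) assumption on $S^{(1)}$ together with the upper bound of Lemma~\ref{thm:theoremlidia}.

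The conceptual content is slim; the main obstacle I anticipate is the bookkeeping around ``preserving $G$''. One must ensure that Lemma~\ref{thm:theoremlidia} is invoked with the tripartition that genuinely isolates $G^{(2)}$ from the rest (so $S^{(1)}$ and $G^{(1)}$ are left intact while $S^{(2)}$ is cleared), that the fixed erasure of $S^{(1)}$ really acts as $\rho_{G^{(1)}S^{(1)}}\mapsto\rho_{G^{(1)}}\otimes\ket0\bra0$ rather than merely reproducing the correct local entropy, and that the reverse-erasure step restores precisely the reduced state dictated by the original $\rho_{GS}$. All of this rests on the explicit factorized form of $\rho_{2a}$ supplied by Theorem~\ref{thm:generaltrsfhsp}; once that is in hand, the arithmetic $(m-\ell)-\ell=m-2\ell$ is immediate.
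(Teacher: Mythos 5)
Your proposal is correct and follows essentially the same route as the paper: apply $U_G\otimes U_S$ to reach the factorized state of Theorem~\ref{thm:generaltrsfhsp}, erase $S^{(1)}$ with a standard Landauer map at cost $(m-\ell)k_BT\ln 2$, erase $S^{(2)}$ against the $\ell$ Bell pairs via Lemma~\ref{thm:theoremlidia} at cost $-\ell k_BT\ln 2$, and uncompute to check that $\tr_S\rho_{GS}$ is preserved. Your extra care about the tripartition in Lemma~\ref{thm:theoremlidia} and the relabeling of $S$ after $U_S^\dagger$ is sound but not needed beyond what the paper already does.
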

\begin{proof}
    This result follows from the form of the state in Eq.\ \ref{eq:general_hsp_trsf_state} and Theorem~\ref{thm:theoremlidia}. For completeness, the resulting of state of the circuit from Figure~\ref{fig:otg_hsp} is reproduced here in order to show it coincides with the one from the standard HSP algorithm in Figure~\ref{fig:standard_hsp_algo}.
    \paragraph*{Steps 1 - 2.} These steps are the same as for the standard HSP algorithm. The resulting state is
    \begin{equation}
        \rho_2 = \frac{1}{{|G|}}\sum_{\substack{\bar{g},\bar{g}'\in\\ G/H}}\left\{\sum_{\substack{h,h'\in \\ H}}\ket{g+h}\bra{g'+h'}\right\}\otimes\ket{f(g)}\bra{f(g')}.
    \end{equation}
    \paragraph*{Steps 2a - 2c. } Applying the operation $U_G\otimes U_S$ to the state $\rho_2$ gives (see calculation in proof of Theorem~\ref{thm:generaltrsfhsp})
    \begin{widetext}
        \begin{align}
            \rho_{2a}&=\left(\vphantom{U_G^\dagger}U_G\otimes U_S\right)\rho_2\left(U_G^\dagger\otimes U_S^\dagger\right)=\frac{1}{\dim\mathcal{H}_G^{(1)}}\left\{\sum_{g^{(1)},\tilde{g}^{(1)}}\ket{g^{(1)}}\bra{\tilde{g}^{(1)}}\otimes \ket{f^{(1)}(g^{(1)})}\bra{f^{(1)}(\tilde{g}^{(1)})}\right\}\otimes\left(\ket{\chi}\bra{\chi}\right)^{\otimes k}.\label{eq:swapping_rho2a}
        \end{align}
    \end{widetext}
    The Bell pairs $\ket{\chi}$ are formed between qubits from $\hilbert_{G}^{(2)}$ and $\hilbert_{S}^{(2)}$.
    The erasure $\tilde{\mathcal{E}}$ in step 2b of $\hilbert_S^{(1)}$ is a standard Landauer erasure at temperature $T$. We are ignorant about the state in $\hilbert_S^{(1)}$, thus we have to pay the full cost of
    \begin{equation}
        W^{(1)}=\dim(\hilbert_S^{(1)})k_B T\ln 2 = (m-\ell)k_B T\ln 2.
    \end{equation}
    Conversely the erasure $\mathcal{E}$ is done with quantum side information according to Theorem~\ref{thm:theoremlidia}. The average work cost of erasure at temperature $T$ is given by
    \begin{equation}
        W^{(2)}=H(S^{(2)}|G^{(2)})k_B T\ln2=-\ell k_B T\ln2,
    \end{equation}
    which amounts to a total average work cost of erasure
    \begin{equation}
        W=W^{(1)}+W^{(2)}=(m-2\ell)k_B T\ln2.
    \end{equation}
    The erasure leaves the reduced state of the $G$ register invariant. After uncomputing $U_G$, we get
    \begin{equation}
        \rho'_{2c}=\frac{1}{{|G|}}\sum_{\bar{g}\in G/H}\left\{\sum_{h,h'\in H}\ket{g+h}\bra{g+h'}\right\},
    \end{equation}
    as in Eq.\ \eqref{eq:std_hsp_redstate} from the standard HSP algorithm.
    \paragraph*{Steps 3 - 6.} Based on the last observation, these steps go through as for the standard case.
\end{proof}

\subsection{Gate complexity of $U_G$ and $U_S$}
\label{appendix:gate_complexity_UG_US}
The transformations $U_G$ and $U_S$ from Definition~\ref{def:general_local_trsf} which are  used in Theorem~\ref{thm:generaltrsfhsp} are permutations of the basis states $\ket{g}$ and $\ket{s}$ for $g\in G$ and $s\in S$. These permutations ensure that after the application of the function oracle $O_f,$ the entanglement is compressed into a well-defined subregister of the main and auxiliary register.

In general, a permutation unitary on the computational basis states of $n$ qubits requires $O(n 2^n)$ CNOT gates \cite{Shende2003} and is therefore not efficiently implementable.
Nevertheless, depending on the type of partial information available, the complexity of the transformations $U_G$ and $U_S$ can be drastically reduced (see for example the PFA, Section~\ref{sec:toy_example_pfa}).
To this end, let us work in the special setting where the partial information is available in the form of an intermediate subgroup $K,$ such that $H\subseteq K\subseteq G$ (as in Promise~\ref{promise:partialsolutionpromise}).
There, the transformations $U_G$ and $U_S$ act on a state $\ket{g,f(g)}$ as follows,
\begin{align}
    U_G\otimes U_S \ket{g,f(g)}=\ket{k_g,[k]}\otimes \ket{\Tilde{f}(k_g),[k]},
\end{align}
where $k_g\in K$ and $[k]\in G/K$ are a decomposition of $g\in G$ into an element in $K$ and the quotient group $G/K$.
Consider the special case where $K$ is already implemented on a subset of qubits of the main register --- that is $\hilbert_K=\operatorname{span}_\C\{\ket{k} : k\in K\}$ is the Hilbert space generated by some but not necessarily all qubits that span $\hilbert_G$. Here the transformation $U_G$ is only a composition of qubit swaps which can be implemented efficiently with a complexity $O(\log|K|)$.
For the target space an analogous rule holds. If the map $f:G\rightarrow S$ implemented on the level of the function oracle $O_f$ respects the qubit decomposition of $\hilbert_G$ into $\hilbert_K\otimes\hilbert_{G/K},$ that is, these subregisters are mapped to subregisters of the auxiliary space $\hilbert_S$, then also $U_S$ has complexity $O(\log|K|)$. One particular case where this happens is the toy example for the PFA shown in Section~\ref{sec:toy_example_pfa}.

\section{Proofs: Oracle simplification in the Hidden Subgroup Problem}
\label{appendix:simplification}
This appendix is dedicated to proving Theorem~\ref{thm:partialinfocorrespondence} and giving more details on the modified HSP algorithm using a simplified oracle. By replacing the function oracle $O_f$ by $\Tilde{O}_f$ one also has to reconsider what group the main register encodes. In fact, as the transformation $U_G$ now hidden in $\Tilde{O}_f$ factors $\hilbert_G$ into registers $\hilbert_K$ and $\hilbert_{G/K}$ encoding the groups $K$ and $G/K$ respectively, the main register now encodes the subgroup $K$. This coincides with the statement, that with the partial information, we can narrow down the search for $H\in G$ to a search of $H\in K$. Consequently, also the generalized quantum Fourier transform $Q_G$ has to be replaced by $Q_K$ as is shown in Figure~\ref{fig:simplification_hsp} from the main part of the paper with the simplified algorithm.

\partialinfocorrespondence*

\begin{proof}
    An explicit calculation of the state $\rho_i$ for the algorithm in the circuit of Figure~\ref{fig:simplification_hsp} is performed, with $1\leq i\leq 6$ indexing the steps defined there.
    \paragraph*{Steps 1 - 2. } The sum $\sum_{k}$ is implicitly over the range of the first factor in $\{\ket{k,t}=U_G\ket{g} : g\in G\}\subseteq\mathcal{H}_K\otimes\mathcal{H}_{G/K}$. Then,
    \begin{align}
        \rho_1&=\frac{1}{|K|}\sum_{k,\tilde{k}}\ket{k,0}\bra{\tilde{k},0}\otimes\ket{0}\bra{0}\\
        &\xmapsto[]{U_G^\dagger\otimes\mathds{1}_S}\frac{1}{|K|}\sum_{k,\tilde{k}}\underbrace{U_G^\dagger\ket{k,0}}_{=:\ket{g(k,0)}}\bra{\tilde{k},0}U_G\otimes\ket{0}\bra{0}=\rho_2\label{eq:notation_intro}.
    \end{align}
    \paragraph*{Steps 3 - 4. } Making use of the notation introduced in Eq.~\eqref{eq:notation_intro} where $g(k,t)\in G$ is the unique element s.t.\ $U_G\ket{g(k,t)}=\ket{k,t}$ the next states can be written down,
    \begin{align}
        \rho_2&\xmapsto[]{{O}_f}\frac{1}{|K|}\sum_{k,\tilde{k}}U_G^\dagger\ket{k,0}\bra{\tilde{k},0}U_G\otimes\ket{f(g(k,0))}\bra{f(g(\tilde{k},0))}\\
        &\xmapsto[]{U_G\otimes U_S}\frac{1}{|K|}\sum_{k,\tilde{k}}\ket{k,0}\bra{\tilde{k},0}\otimes\ket{f^{(1)}(k),0}\bra{f^{(1)}(\tilde{k}),0}=\rho_4.
    \end{align}
    For the last equality we used property \ref{enum:reduced_trsf_property2} imposed in Theorem~\ref{thm:generaltrsfhsp}. At this stage we see that $\tilde{O}_f$ acts trivially on the registers $\mathcal{H}_G^{(2)}=\mathcal{H}_{G/K}$ and $\mathcal{H}_S^{(2)}$, saving $2\ell$ erasures like the online erasure protocols do.
    
    \paragraph*{Steps 5 - 6: Recovering the hidden subgroup. } It remains to be shown that the modified algorithm can still be used to determine the hidden subgroup $H$. Let
    \begin{align}
        \rho_{4}'&=\tr_{\mathcal{H}_{G/K}\otimes\mathcal{H}_S}\rho_4\\
        &=\frac{1}{|K|}\sum_{\bar{k}\in K/H}\left\{\sum_{h,h'\in H}\ket{k+h}\bra{k+h'}\right\}
    \end{align}
    be the reduced state of $\rho_4$ where all registers but $\mathcal{H}_K$ have been traced out. Observing $H\subseteq K$, the quantum Fourier transform $Q_{K}$ acts on $\rho_4'$ as follows
    \begin{align}
       \rho_4'&\xmapsto[]{Q_{K}}\frac{1}{|K|}\sum_{\bar{k}\in K/H}\left\{\sum_{h,h'\in H}\ket{\chi_{k+h}}\bra{\chi_{k+h'}}\right\}\\\
       &=\left(\frac{|H|}{|K|}\right)^2\sum_{\bar{k}\in K/H}\left\{\sum_{g,g'\in H_K^\perp}\chi_k(g)\chi_k(g')\ket{g}\bra{g'}\right\},
    \end{align}
    where $H_{K^\perp}=\{g\in K : \forall h\in H :\chi_g(h)=1\}$ is the analogue of $H^\perp$ from the standard HSP algorithm with the difference that $G$ has been replaced by $K$. This calculation demonstrates that the modified algorithm still recovers the hidden subgroup $H$.
\end{proof}

%

\end{document}